\date{\today}
\newtheorem{corollary}{Corollary}
\newtheorem{proposition}{Proposition}
\newtheorem{lemma}{Lemma}
\newtheorem{theorem}{Theorem}
\newtheorem{remark}{Remark}
\newtheorem{example}{Example}
\newcommand{\N}{\mathbb{N}} 
\newcommand{\C}{\mathbb{C}} 
\newcommand{\Z}{\mathbb{Z}} 
\newcommand{\R}{\mathbb{R}} 
\newcommand{\bx}{{\bf x}}
\newcommand{\DA}{D_{{\bf A}}}
\newcommand{\SA}{  {\rm sgn}({\DA})}
\newcommand{\sps}[2]{\langle #1,#2 \rangle} 
\newcommand{\core}{C_0^\infty(\R^2,\C^2)}
\newcommand{\hilbert}{L^2(\R^2,\C^2)}
\newcommand{\lp}{\hat{\boldsymbol \pi}} 
\newcommand{\ri}{\mathrm{i\,}}
\newcommand{\rd}{\mathrm{d}}
\newcommand{\pn}{{p_n}}
\newcommand{\curla}{{\rm curl\,}{\bf A}}
\title[Confinement-deconfinement transitions for Dirac particles]{Confinement-deconfinement transitions for two-dimensional Dirac particles}
\author{Josef Mehringer}
\address{Josef Mehringer\\
Mathematisches Institut\\
Ludwig-Maximilians-Universit\"at\\
Theresienstra{\ss}e 39\\
D-80333 M\"unchen, Germany.}
\email{josef.mehringer@math.lmu.de}
\author{Edgardo Stockmeyer}
\address{Edgardo Stockmeyer\\
Mathematisches Institut\\
Ludwig-Maximilians-Universit\"at\\
Theresienstra{\ss}e 39\\
D-80333 M\"unchen, Germany.}
\address{
({\it Present address:}) Edgardo Stockmeyer\\ Departamento de F\'\i sica\\
Pontificia Universidad Cat\'olica de Chile\\
Vicu\~na Mackenna 4860\\
 Santiago 7820436, Chile.}
\email{stock@fis.puc.cl}
\subjclass[2010]{Primary 81Q10; Secondary 46N50, 81Q37}
\keywords{Electromagnetic 
Dirac operator, graphene, dense pure point spectrum}
\begin{document}
\begin{abstract}
  We consider a two-dimensional massless Dirac operator coupled to a
  magnetic field $B$ and an electric potential $V$ growing at
  infinity. We find a characterization of the spectrum of the
  resulting operator $H$ in terms of the relation between $B$ and $V$
  at infinity. In particular, we give a sharp condition for the
  discreteness of the spectrum of $H$ beyond which we find dense pure
  point spectrum.
\end{abstract}
\maketitle
\section{Introduction}
Graphene, a two dimensional lattice of
carbon atoms arranged in a honeycomb structure, has attracted great attention in the last few years
due to its unusual properties \cite{castro2009electronic,novoselov2004electric}.  The dynamics of
its low-energy excitations (the charge carriers) can be described by a
two-dimensional massless  Dirac operator $D_0$ \cite{wallace,Fefferman}, where the speed of
light $c$ is replaced by the Fermi velocity, $v_F\sim 10^{-2} c$.  A
remarkable property of these Dirac particles is their lack of
localization in the presence of electric potential walls (i.e.,
potentials $V$ with $V(\bx)\to\infty$ as $|\bx|\to\infty$).  Indeed,
if we assume that $V$ is rotationally symmetric and of `regular
growth' the spectrum of the operator $D_0+V$ equals the
whole real line and is absolutely continuous
\cite{titchmarsh,newton,erdelyi,kms97}.  It is also known, at least in
three dimensions, that a much larger class of potentials growing at
infinity do not produce eigenvalues \cite{vogelsang,Kalfetal2003}.

One way to localize Dirac particles (in the sense that the Hamiltonian
has non-trivial 
discrete spectrum)  is through inhomogeneous magnetic
fields when they are asymptotically constant
\cite{de2007magnetic,KS2012} as well  as when they grow to infinity.  
Consider, for instance, a magnetic field $B=\curla$ with $B(\bx)\to\infty$ as $|\bx|\to\infty$ and
denote by $\DA$ the corresponding Dirac operator coupled to $B$.  It
is known that the spectrum of $\DA$ is discrete away from zero and
zero is an isolated  eigenvalue
of infinite multiplicity (see Section \ref{susy}).  

In this article we consider two-dimensional massless Dirac operators
coupled to both an electric potential $V$ and a magnetic field $B$. We
study the combination of the two effects described above: The
deconfinement effect associated to $V$ and the confinement one
associated to $B$.

Before presenting our main results let us first discuss this problem
assuming that $V$ and $B$ are sufficiently regular  positive rotationally
symmetric functions. In this case, the
Dirac operator admits an angular momentum decomposition $ \DA
+V=\bigoplus_{j\in \Z} h_j$ and its spectrum, $\sigma(\DA+V)$,
satisfies
\begin{align*}
  \sigma(\DA+V)=\overline{\bigcup_{j\in \Z} \sigma{(h_j)}}.
\end{align*}
Let 
$
  A(r)=\frac{1}{r}\int_0^r B(s)s ds
$
be the  modulus of  the magnetic vector potential in the rotational gauge.
It is easy to show, on the one hand, that if $A(r)\to \infty$ as $r\to\infty$ and 
\begin{align}\label{sym1}
  \lim_{|\bx|\to \infty} V(|\bx|)/A(|\bx|)<1,
\end{align}
the spectrum of  $h_j$ is discrete, for each $j\in \Z$  (see Proposition
\ref{purepoint} in the appendix for the precise statement).  On the other hand,
as opposed to the non relativistic case, it is  known
\cite[Proposition 2]{KMSYamada} that if $V(r)\to \infty$ as
$r\to\infty$ and 
\begin{align}\label{sym2}
   \lim_{|\bx|\to \infty} V(|\bx|)/A(|\bx|)>1,
\end{align}
the spectrum of each $h_j$ equals the whole real line and is purely
absolutely continuous.  This
phenomenon was recently discussed from the physical point of view in
\cite{giavaras2009magnetic}. In that article a device was proposed  to
control the localization properties of particles in graphene by
manipulating the electro-magnetic field at infinity, i.e., far away
from the sample.

Clearly, if condition \eqref{sym2} is satisfied the spectrum of the
full operator $H:=\DA+V$ is also absolutely continuous and equals
$\R$. Conversely, one has pure point spectrum if condition \eqref{sym1} holds.
It is, however,  unclear whether the eigenvalues of $H$ accumulate. 
Assume that $V(\bx), B(\bx)\to \infty$ as $|\bx|\to \infty$. 
One may expect that when the quotient $|V(|\bx|)/A(|\bx|)|$ is
sufficiently small, for large $|\bx|$,  the main effect of the
electric potential is to remove the zero modes of $\DA$ yielding
purely discrete spectrum for $H$. However, as this quotient
grows the eigenvalues of the $h_j$ might accumulate creating points in the essential
spectrum of $H$.

The aim of our work is to shed some light on the spectrum of
$H$ in terms of the relation between $B$ and $V$ at infinity.  We
emphasize that most of our results do not assume rotational
symmetry. In fact, besides some regularity conditions on $B$ and $V$
we only require that the potential $V$ grows subexponentially fast.

Let us  describe  our results  disregarding technical  assumptions (for the precise
 statements see Section \ref{mainresults}).   
We show (see Theorem \ref{thm1}) that the spectrum of $H$ is discrete if
\begin{align}
  \label{eq:33.2}
 \limsup_{|\bx|\to \infty} \frac{V^2(\bx)}{2|B(\bx)|}<1
\end{align}
and moreover that this condition is sharp in the following sense: 
If the quotient in \eqref{eq:33.2} converges to
$1$ along a sequence $(\bx_n)_{n\in \N}\subset \R^2$ with
$\lim_{n\to\infty}|\bx_n|= \infty$ the essential spectrum of $H$,
$\sigma_{\rm ess}(H)$, is not empty. In fact, we prove  (see
Theorem \ref{thm3a}) that zero belongs to
the essential spectrum of $H$ if, for some natural number $k$, the
quotient $V^2(\bx_n)/|2B(\bx_n)|$ 
converges to $k$ sufficiently fast, as $n\to \infty$. In addition,  we
find (see Corollary \ref{thm3-2}) that the essential spectrum of $H$ covers
the whole real line if there is a continuous path
$\gamma:\R^+\to \R^2$ with $|\gamma(t)|\to \infty$ as $t\to \infty$
such that  $ V^2(\gamma(t))/|2B(\gamma(t))|$
converges  to infinity  with moderate speed (see Remark \ref{growth}), as $t\to \infty$.

In order to get a better picture of our results consider the example
when  $V(\bx) = V_0|\bx|^t$ and  $B(\bx) =
B_0|\bx|^s$ for some constants $V_0,B_0, t>0$ and $s\ge 0$. In this case ${\bf A}$, in the rotational
gauge, satisfies 
$$|{\bf A}(|\bx|)|=B_0 |\bx|^{s+1}/(s+2).$$
Thus,  according to
Proposition \ref{purepoint} and \cite[Proposition
5.1]{KMSYamada}   (see also \eqref{sym1} and
\eqref{sym2}) we have that 
\begin{itemize}
\item[(a)] $\sigma(H)$ is pure point   if either $t< s+1$, or  $ t=s+1$
  and $V_0 < B_0/(s+2).$
\item[(b)] $\sigma(H)=\R$ is purely absolutely  continuous if either
  $t > s+1$, or  $ t=s+1$ and $V_0 > B_0/(s+2)$.
\end{itemize}
From Theorems \ref{thm1} and \ref{thm3a} and Corollary \ref{thm3-2} 
(see Example \ref{mainexample}) we conclude that
\begin{itemize}
  \item[(c)] $\sigma_{\rm ess}(H)=\emptyset$  if either $t < s/2$, or
    $ t=s/2$ and $V_0^2 < 2B_0$.
 \item[(d)] $0\in \sigma_{\rm ess}(H)$  if $t = s/2$ and $V_0^2 =
    2kB_0$ for some $k\in \N$.
\item[(e)] $\sigma(H)=\R$ is pure point if $t \in (s/2,(s+1)/2)$.
\end{itemize}
Hence, if we fix $B$ and increase the strength of $V$ (at infinity),
we observe a transition from purely discrete (c) to purely absolutely
continuous spectrum (b) passing through dense pure point spectrum (e).
In other words, we observe a transition between a strongly confined
system, in the sense that the energy required to `bring a particle to
infinity' is not finite, to a completely deconfined system. Between
these two regimes there is another one (e) belonging to a (presumably) weaker form of
confinement.\\
\vspace{0.5mm}\\
{\it The organization of this article is as follows:} In the next
section we state  our main results precisely. We recall some basic
facts about magnetic Dirac operators in Section \ref{susy}. In Section
\ref{usefull} we prove some useful commutator estimates used in the
proof of Theorem \ref{thm1} which is given in Section
\ref{prooft1}. Theorem \ref{thm2} is proven in Section \ref{prooft2}
and the proofs of Theorems \ref{thm3a} and \ref{thm3} are given in
Section \ref{prooft3}.  The main text is followed by an appendix containing
some auxiliary results.
\bigskip

\noindent
{\bf Acknowledgments.}
This work has been supported by SFB TR12 of the DFG.
\section{Main results}\label{mainresults}
Assume that $V$ and 
$B$ are continuous functions on $\R^2$.  We define the two-dimensional
massless Dirac operator coupled to a magnetic field $B$ on $\mathcal{H} :=
\hilbert$ a priori as
\begin{align}
  \label{eq:3}
  \DA\varphi:={\boldsymbol \sigma}\cdot \big(- \ri \nabla- {\bf
    A}\big)\varphi,\quad \varphi\in \core,
\end{align}
where ${\bf A}=(A_1,A_2)\in C^1(\R^2, \R^2)$ satisfies $B=\curla:=\partial_1A_2 -\partial_2
A_1$ and ${\boldsymbol \sigma}=(\sigma_1,\sigma_2)$ with 
\begin{align*}
  \sigma_1=\left(\begin{array}{cc}
0&1\\
1&0 
\end{array}\right),
\quad
 \sigma_2=\left(\begin{array}{cc}
0&-\ri\\
\ri&0 	
\end{array}\right).
\end{align*}
Similarly we define 
\begin{align}
  \label{eq:4}
  H\varphi:=(\DA+V)\varphi,\quad \varphi\in \core.
\end{align}
In view of \cite{Chernoff77} $\DA$ and  $H$ are
essentially self-adjoint. We denote the self-adjoint extensions
of $\DA$ and $H$ by the same symbols and their domains by
$\mathcal{D}(\DA)$
and $\mathcal{D}(H)$ respectively.
\begin{theorem}\label{thm1}
Let $V\in C^1(\R^2,\R)$, $B\in C(\R^2,\R)$, and ${\bf
  A}\in C^1(\R^2,\R^2)$ such that $B=\curla$ on $\R^2$. Assume that
\begin{eqnarray}
&&|V(\bx)|\to \infty\quad\mbox{as}\quad|\bx|\to \infty,\label{t1}\\
 &&\left|\frac{\nabla V (\bx)}{V(\bx)}\right|
  \to 0\quad\mbox{as}\quad|\bx| \to \infty,\label{t2}\\
&&\limsup_{|\bx|\to \infty}
  \frac{V^2(\bx)}{2|B(\bx)|}<1. \label{t3}
\end{eqnarray}
Then $(H+\ri)^{-1}$ is  compact  in $\mathcal{H}$, i.e., 
  $\sigma_{\rm ess}(H)=\emptyset$.
\end{theorem}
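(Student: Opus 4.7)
The plan is to prove a quadratic-form inequality
\begin{equation*}
\|H\psi\|^2 \ge \langle\psi, W\psi\rangle - C\|\psi\|^2, \qquad \psi \in \core,
\end{equation*}
with some continuous $W\colon\R^2\to[0,\infty)$ satisfying $W(\bx)\to\infty$ as $|\bx|\to\infty$. Together with the local elliptic regularity of $\DA$ (so that the graph norm of $H$ dominates an $H^{1/2}_{\mathrm{loc}}$ norm up to gauge), this forces $(H+\ri)^{-1}$ to be compact: any sequence bounded in $\|\cdot\|+\|H\cdot\|$ is precompact on compacts by Rellich--Kondrachov and decays uniformly in $L^2$ outside large balls thanks to $\int_{|\bx|>R}|\psi|^2\le (\inf_{|\bx|>R}W)^{-1}(\|H\psi\|^2+C\|\psi\|^2)$. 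The starting point is the self-adjoint identity
\begin{equation*}
\|H\psi\|^2 = \|\DA\psi\|^2 + \|V\psi\|^2 + 2\,\mathrm{Re}\,\langle V\psi,\DA\psi\rangle,
\end{equation*}
in which the indefinite cross term is the heart of the matter: a naive Cauchy--Schwarz already exhausts both positive contributions.

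The heuristic for the correct bound comes from the constant-coefficient model $V,B\equiv \mathrm{const}$, where $\sigma(\DA) = \{0\}\cup\{\pm\sqrt{2n|B|}:n\ge 1\}$ and $V$ simply shifts the spectrum, so
\begin{equation*}
\min\sigma(H^2) = \min\!\bigl(V^2,\,(\sqrt{2|B|}-|V|)^2\bigr) \ge \min\!\bigl(V^2,\,(1-\sqrt{\delta})^2\cdot 2|B|\bigr)
\end{equation*}
whenever $V^2\le 2\delta|B|$ with $\delta<1$. Under \eqref{t1} and \eqref{t3} (which jointly force $|B|\to\infty$), the right-hand side tends to infinity. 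To transport this to the variable-coefficient setting, I would introduce a bounded-overlap quadratic partition of unity $\sum_k\phi_k^2=1$ subordinate to balls $U_k = B_{r_k}(\bx_k)$, and exploit that $[\phi,[\phi,H]] = [\phi,-\ri\,\boldsymbol\sigma\cdot\nabla\phi]=0$ holds pointwise because $\phi$ and $\boldsymbol\sigma\cdot\nabla\phi$ are both multiplication operators. This gives the clean IMS identity
\begin{equation*}
\|H\psi\|^2 = \sum_k\|H\phi_k\psi\|^2 - \sum_k\||\nabla\phi_k|\,\psi\|^2,
\end{equation*}
reducing everything to a uniform local lower bound $\|H\phi_k\psi\|^2\ge c_k\|\phi_k\psi\|^2$ with $c_k\to\infty$ as $\bx_k\to\infty$. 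On each $U_k$ one compares $H$ with the constant-coefficient model $\DA_0^{(k)}+V(\bx_k)$ and reads off $c_k\ge \min\!\bigl(V(\bx_k)^2,\,(1-\sqrt{\delta})^2\cdot 2|B(\bx_k)|\bigr)$ from the heuristic above.

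The hard part will be the perturbative passage from $H\phi_k$ to this constant-coefficient model. Hypothesis \eqref{t2} handles the oscillation of $V$ on $U_k$ relative to $V(\bx_k)$ as soon as $r_k$ is chosen small with respect to $|V(\bx_k)/\nabla V(\bx_k)|$, but the magnetic side is trickier: ${\bf A}$ is only $C^1$ and $B$ only continuous, so the Landau-level gap $\sqrt{2|B|}$ underlying the model must be shown to survive a gauge-sensitive perturbation on each cell. This is exactly what the commutator estimates of Section \ref{usefull} are designed to deliver; they quantify errors of the form $[H,\phi_k]$ and $\phi_k(V-V(\bx_k))\phi_k$ and let them be absorbed into the positive leading term $2|B(\bx_k)|(1-\sqrt{\delta})^2$ provided the scales $r_k$, $|V(\bx_k)|$ and $|B(\bx_k)|$ are jointly tuned. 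Conditions \eqref{t1}--\eqref{t3} provide the required separation: $|B|$ dominates $V^2$, and both dominate the commutator errors $1/r_k^2$ once $r_k$ is chosen smaller than $|V/\nabla V|$, giving $c_k\to\infty$ and hence the desired $W$.
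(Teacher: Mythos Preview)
Your IMS-localization strategy has a genuine gap on the magnetic side. After gauge-fixing on each cell $U_k$, the perturbation you must absorb is $\boldsymbol\sigma\cdot(\widetilde{\mathbf A}_k-{\mathbf A}_k)$, whose size is of order $r_k\cdot\sup_{U_k}|B-B(\bx_k)|$; for this to be small compared with the local Landau gap $\sqrt{2|B(\bx_k)|}$ you need a quantitative bound on the oscillation of $B$. But Theorem~\ref{thm1} assumes only $B\in C(\R^2)$ --- no modulus of continuity, no $|\nabla B|/B\to 0$. A field like $B(\bx)=|\bx|^2(2+\sin e^{|\bx|})$ satisfies all hypotheses (take $V=|\bx|$), yet $\sup_{U_k}|B-B(\bx_k)|\sim |\bx_k|^2$ unless $r_k$ is exponentially small, in which case the IMS error $\sum\||\nabla\phi_k|\psi\|^2\sim r_k^{-2}$ explodes. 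The scales cannot be jointly tuned. You also misread Section~\ref{usefull}: those commutator estimates are not about $[H,\phi_k]$ or about freezing $V$; they bound $V[P_0,V^{-1}]$ and $V[\SA P_0^\perp,V^{-1}]$, where $P_0$ is the \emph{global} projection onto $\ker\DA$.

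The paper avoids spatial localization entirely. It reduces (Lemma~\ref{schneider}) to the case $V\ge\delta^{-1}$, $|\nabla V|\le\delta V$, $V^2\le 2(1-\eta)B$ everywhere, then splits $\varphi=P_0\varphi+P_0^\perp\varphi$. The point is that the supersymmetric identity $dd^*=d^*d+2B\ge 2B$ holds \emph{pointwise}, with the true variable $B$, so on the second component one gets $\|d^*\varphi_2\|^2\ge(1-\eta)^{-1}\|V\varphi_2\|^2$ directly, with no freezing. On the first component, the isometry $s$ from \eqref{eq:8} transfers $d^*d\upharpoonright\ker(d)^\perp$ to $dd^*$, and the commutator $V[s^*,V^{-1}]$ (one block of $V[\SA P_0^\perp,V^{-1}]$) is $O(\delta^2)$ by Lemma~\ref{lemmakey2}. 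The cross term $\langle VP_0\varphi,\DA P_0^\perp\varphi\rangle$ is handled by Lemma~\ref{lemmakey}(c). All errors are controlled by $\|\nabla V/V\|_\infty$ alone; the continuity of $B$ is never used quantitatively, only the pointwise inequality $2B\ge(1-\eta)^{-1}V^2$.
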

\begin{remark}\label{suzuki}
  A similar result was obtained in \cite{Suzuki2000}. However, there
  the statement was only proved when the limit superior in \eqref{t3}
  is strictly smaller than $1/4$ instead of $1$. Our proof is quite
  different from the one given in \cite{Suzuki2000}. We split the 
  analysis on the spaces ${\rm ker}\DA$ and ${\rm ker}\DA^\perp$
and estimate the cross terms with the commutator bounds derived in
Section \ref{usefull}. 
\end{remark}
The  next two theorems state  that the constant  $1$ in \eqref{t3} above is in
fact sharp. 
\begin{theorem}\label{thm2}
Let $ V\in C^1(\R^2,\R)$,  $ B\in C^2(\R^2,\R)$, with $|B|>B_0$ for some $B_0>0$ and  ${\bf
  A}\in C^3(\R^2,\R^2)$ such that $B=\curla$ on $\R^2$. Assume that
\begin{eqnarray}
&& |V(\bx)|\to \infty\quad\mbox{as}\quad|\bx|\to \infty,
\label{wi1}\\
&& \|\nabla V/V\|_\infty<\infty\quad\mbox{and}\quad \|\Delta B/B\|_\infty<\infty, \label{wi2}\\
&& \|(V^2 -2 |B|)/V\|_\infty<\infty.\label{wi3} 
\end{eqnarray}
Then $\sigma_{\rm ess}(H)\not= \emptyset.$
\end{theorem}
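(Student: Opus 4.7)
The plan is to produce a \emph{singular sequence} for $H$, that is, a sequence $(\psi_n)\subset\mathcal{D}(H)$ with $\|\psi_n\|=1$, $\psi_n\rightharpoonup 0$, and $\sup_n\|H\psi_n\|<\infty$. Since $\mathbf{1}_{[-K,K]}(H)$ is compact whenever $\sigma_{\rm ess}(H)\cap[-K,K]=\emptyset$, the existence of such a sequence forces $\sigma_{\rm ess}(H)\neq\emptyset$ by a standard spectral-theorem argument. The heuristic behind the construction is that at a point $\bx$ where $V^2(\bx)\approx 2|B(\bx)|$, the operator $\DA$ is locally close to the Landau--Dirac operator for the constant field $B(\bx)$, whose $n=1$ levels lie at $\pm\sqrt{2|B(\bx)|}$; combining the eigenvalue $-\sqrt{2|B(\bx)|}$ (after matching signs) with multiplication by $V$ produces a local approximate eigenvalue $V(\bx)-\sqrt{2|B(\bx)|}$, which condition \eqref{wi3} forces to stay bounded as $\bx\to\infty$.

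Concretely, I pick a sequence $\bx_n\to\infty$ along which (after extraction) $V(\bx_n)>0$ and $B(\bx_n)>0$; the other sign combinations are analogous. The identity
\[
V(\bx_n)-\sqrt{2B(\bx_n)}=\frac{V(\bx_n)^2-2B(\bx_n)}{V(\bx_n)+\sqrt{2B(\bx_n)}}
\]
combined with \eqref{wi3} makes the left-hand side bounded, convergent (after further extraction) to some $\lambda\in\R$, and forces $B(\bx_n)\to\infty$. Let $\bA_n(\bx):=\tfrac{B(\bx_n)}{2}\bigl(-(\bx-\bx_n)_2,\,(\bx-\bx_n)_1\bigr)$ denote the rotational gauge for the frozen field $B(\bx_n)$. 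After a gauge transformation on $\bA$ matching $\bA_n$ to first order at $\bx_n$ (possible since any two affine $\R^2$-vector fields of the same constant curl differ by a pure gradient), I take as reference state the first-excited Landau--Dirac eigenmode $\psi_n^{(0)}$ of $D_{\bA_n}$ with eigenvalue $-\sqrt{2B(\bx_n)}$: explicitly, a polynomial-times-Gaussian two-spinor centered at $\bx_n$ of magnetic length $\ell_n:=1/\sqrt{B(\bx_n)}$. Multiplying by a smooth cutoff $\chi_n$ supported in $\{|\bx-\bx_n|<r_n\}$ with $r_n/\ell_n\to\infty$, normalizing to $\psi_n:=\chi_n\psi_n^{(0)}/\|\chi_n\psi_n^{(0)}\|$, and spacing the $\bx_n$ so that the supports are disjoint, one obtains $\|\psi_n\|=1$ and $\psi_n\rightharpoonup 0$.

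A direct computation yields $H\psi_n=(V(\bx_n)-\sqrt{2B(\bx_n)})\psi_n+R_n$ with
\[
R_n=(\DA-D_{\bA_n})\psi_n+(V-V(\bx_n))\psi_n-\ri\,\boldsymbol{\sigma}\cdot(\nabla\chi_n)\,\psi_n^{(0)}/\|\chi_n\psi_n^{(0)}\|,
\]
so the problem reduces to $\sup_n\|R_n\|<\infty$ (in fact, $o(1)$). The cutoff term is exponentially small since $|\psi_n^{(0)}|$ has Gaussian tails of width $\ell_n\ll r_n$. The potential term is handled via \eqref{wi2}: the bound $|\nabla V|\le C|V|$ together with $V(\bx_n)\ell_n\to\sqrt{2}$ (from \eqref{wi3}) gives $\|(V-V(\bx_n))\psi_n\|=O(1)$ after integration against the Gaussian $|\psi_n^{(0)}|^2$. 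The magnetic term $(\DA-D_{\bA_n})\psi_n=-\boldsymbol{\sigma}\cdot(\bA-\bA_n)\psi_n$ is controlled by the first-order matching, which makes $|\bA(\bx)-\bA_n(\bx)|\lesssim|\bx-\bx_n|^2\,\|\nabla^2\bA\|_{L^\infty(\{|\bx-\bx_n|<r_n\})}$; the second derivatives of $\bA$ are in turn bounded by the regularity of $B$ and by $\|\Delta B/B\|_\infty<\infty$ through, e.g., a Coulomb-gauge identification $\bA=(-\partial_2\phi,\partial_1\phi)$ with $\Delta\phi=B$.

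The main obstacle is this last magnetic estimate: extracting uniform control on $\nabla^2\bA$ at the quantum scale $\ell_n$ from the \emph{pointwise} hypotheses \eqref{wi2} requires a careful gauge choice and a local elliptic estimate on the scalar potential $\phi$. Once $\sup_n\|R_n\|<\infty$ is established, the spectral-theorem argument sketched above produces $\sigma_{\rm ess}(H)\cap[\lambda-C,\lambda+C]\neq\emptyset$ for any $C\ge\limsup_n\|R_n\|$, so in particular $\sigma_{\rm ess}(H)\neq\emptyset$; the stronger statement $\|R_n\|\to 0$ would yield the refined conclusion $\lambda\in\sigma_{\rm ess}(H)$.
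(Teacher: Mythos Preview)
Your approach is genuinely different from the paper's, and the gap you flag as ``the main obstacle'' is real and not closed by what you have written.

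\textbf{Where the argument breaks.} After the gauge matching, your magnetic remainder is
\[
\|\boldsymbol{\sigma}\cdot(\bA-\bA_n)\psi_n\|
\ \lesssim\ \Big(\sup_{|\by-\bx_n|\le r_n}|\nabla B(\by)|\Big)\,
\Big(\int |\bx-\bx_n|^4\,|\psi_n|^2\,d\bx\Big)^{1/2}
\ \approx\ \frac{\sup_{B_{r_n}(\bx_n)}|\nabla B|}{B(\bx_n)},
\]
so you need $|\nabla B|/B$ bounded near the centers $\bx_n$. The hypotheses of Theorem~\ref{thm2} contain no first--derivative information on $B$: only $\|\Delta B/B\|_\infty<\infty$. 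Your suggested route through the Coulomb gauge $\bA=(-\partial_2\phi,\partial_1\phi)$ does not help, because $\nabla^2\bA$ consists of third derivatives of $\phi$, and local Schauder/Calder\'on--Zygmund estimates for $\Delta\phi=B$ control those only through H\"older or $L^p$ norms of $\nabla B$, which you do not have. (It is conceivable that a Cheng--Yau type gradient estimate for the positive solution $B$ of $\Delta B=(\Delta B/B)\,B$ would yield $\sup|\nabla B|/B\le C\sqrt{\|\Delta B/B\|_\infty}$, but that is a substantial extra ingredient, not a ``careful gauge choice''.) As it stands, the proposal does not establish $\sup_n\|R_n\|<\infty$.

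\textbf{How the paper proceeds instead.} The paper does \emph{not} build a Weyl sequence for Theorem~\ref{thm2}; that technique is reserved for Theorems~\ref{thm3a} and~\ref{thm3}, whose hypotheses explicitly control $\nabla B$. For Theorem~\ref{thm2} the paper exploits the exact zero modes of $\DA$: with $\Delta\phi=B$ and $\Omega=\omega e^{-\phi}\in\ker d$, one has $dd^*\Omega=2B\Omega$, and for $\psi=(d^*\Omega,\,-V\Omega)^{\mathrm T}$ a two--line computation gives
\[
H\psi=\begin{pmatrix}(\ri\partial_1V+\partial_2V)\,\Omega\\ (2B-V^2)\,\Omega\end{pmatrix},
\qquad
\|H\psi\|\le\big(\|\nabla V/V\|_\infty+\|(2B-V^2)/V\|_\infty\big)\|\psi\|.
\]
Thus $H$ is bounded on the infinite--dimensional space $W=\{(d^*\Omega,-V\Omega)^{\mathrm T}\}$, which forces $\sigma_{\rm ess}(H)\neq\emptyset$. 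The hypothesis $\|\Delta B/B\|_\infty<\infty$ enters only to guarantee $\dim W=\infty$: one checks that $\widetilde\phi:=\phi-\tfrac12\ln B$ satisfies $\Delta\widetilde\phi\ge B-\tfrac12\Delta B/B\to\infty$, so the space $\{\omega e^{-\widetilde\phi}\in L^2\}$ (isomorphic to the space of admissible $\Omega$'s, which must lie in $L^2(B\,d\bx)$) is infinite--dimensional by the Aharonov--Casher mechanism. No control of $\nabla B$ is ever needed.

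In short: your Landau--level trial states are the right idea for the theorems that assume decay of $\nabla B/B^{1-\varepsilon}$, but under the weaker hypotheses of Theorem~\ref{thm2} the zero--mode construction is what makes the proof go through without the missing gradient estimate.
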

\begin{remark}\label{velocity1}
Note that conditions \eqref{wi1} and \eqref{wi3} are equivalent to 
\begin{align*}
  \left| \frac{V^2(\bx)}{2|B(\bx)|}-1\right| \le  \, \frac{c}{|V(\bx)|} \longrightarrow 0 \quad 
\mbox{as}\quad |\bx|\to\infty,
\end{align*}
for some constant $c > 0$.
\end{remark}
\begin{remark}
The statement of  Theorem \ref{thm2} can also be
obtained assuming that $\liminf_{|\bx|\to \infty} |\bx|^2\big(B(\bx) - \tfrac{1}{2}\Delta B(\bx)/ B(\bx)\big) >0 $
instead of\eqref{wi1} and only that $\|\nabla
V/V\|_\infty<\infty$ instead of \eqref{wi2}.
\end{remark}
In order to state the next two results we use the following definition:
We say that a function $f:\R^2\to\R$ varies with 
rate $\nu\in [0,1]$ at infinity if there are constants $R>1$ and $C>0$ such that for any
$\alpha:\R^2\to\R^2$ with $\alpha(\bx)=o(|\bx|^\nu)$ as
$|\bx|\to \infty$ the function $f$ satisfies the bound
\begin{align}
  \label{eq:12}
  |f(\bx +\alpha(\bx))|\le C |f(\bx)|, \quad \mbox{for all} \quad |\bx|>R.
\end{align}
Clearly, if $f$ varies with rate $\nu\in [0,1]$ then it also varies with
rate $\nu'$ at infinity  for all $\nu '\in [0,\nu]$.
Note also  that power functions of $|\bx|$ with positive power
vary with any rate $\nu\in [0,1]$.
\begin{theorem}\label{thm3a}
  Let $B,V\in C^1(\R^2,\C)$ such that $|\nabla V|, |\nabla B|$ vary with rate $0$
  at infinity. Let ${\bf A}\in
  C^2(\R^2,\R^2)$ such that $B=\curla$ on
  $\R^2$.  Assume that there is a sequence
  $(\bx_n)_{n\in\N}$ with $|\bx_n|\to\infty$ as $n\to\infty$ and constants $k\in \N$ and $\varepsilon\in(0,1)$ such that, as
$n\to\infty$, 
\begin{eqnarray}
\label{con0a}
 &&|V(\bx_n)|\to \infty,\\
\label{con1a}
&& \frac{\nabla B(\bx_n)}{|B(\bx_n)|^{1-\varepsilon}},
\,\frac{\nabla V(\bx_n)}{|V(\bx_n)|^{1-\varepsilon}}\longrightarrow 0,
\\
\label{con2a}
 &&\frac{V^2(\bx_n)-2k|B(\bx_n)|}{V(\bx_n)}\longrightarrow
0.
\end{eqnarray}
Then $0\in \sigma_{\rm ess}(H)$.
\end{theorem}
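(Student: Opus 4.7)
The plan is to invoke Weyl's criterion by constructing a singular sequence $(\psi_n)$ for $H$ at energy zero, i.e.\ $\|\psi_n\|\ge c>0$, $\|H\psi_n\|\to 0$, and $\psi_n\to 0$ weakly. The heuristic behind \eqref{con2a} is that in a shrinking neighbourhood of $\bx_n$ the Dirac operator $H$ looks like $D_{\mathbf{A}_{\rm const}}+V(\bx_n)$ with constant magnetic field $B(\bx_n)$, whose spectrum is the shifted Landau ladder $\{\pm\sqrt{2k|B(\bx_n)|}+V(\bx_n):k\in\N\cup\{0\}\}$; condition \eqref{con2a} is exactly the statement that one Landau level crosses zero along the sequence.

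Write $V_n:=V(\bx_n)$, $B_n:=B(\bx_n)$. Since $|V_n|\to\infty$ and $(V_n^2-2k|B_n|)/V_n\to 0$ with $k\in\N$, we must have $k\ge 1$ and $|B_n|\sim V_n^2/(2k)\to\infty$. I pick a radius $r_n$ such that
\begin{align*}
r_n\sqrt{|B_n|}\to\infty,\qquad r_n^2\,|\nabla B(\bx_n)|\to 0,\qquad r_n\,|\nabla V(\bx_n)|\to 0,
\end{align*}
which is possible by \eqref{con1a}; for instance $r_n=|V_n|^{-1+\varepsilon/2}$ works. Let $A_n(\bu):=\tfrac12 B_n(-u_2,u_1)$ be the symmetric-gauge potential for the constant field $B_n$. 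On the simply connected ball $B(\bx_n,r_n)$ one has $\mathrm{curl}(\mathbf{A}-A_n(\cdot-\bx_n))=B-B_n$, and the rate-$0$ property of $|\nabla B|$ yields $|B(\bx)-B_n|\le C|\nabla B(\bx_n)|\,|\bx-\bx_n|$ there; the Poincaré lemma then provides
\begin{align*}
\mathbf{A}(\bx)=A_n(\bx-\bx_n)+\nabla\lambda_n(\bx)+\widetilde{\mathbf{A}}_n(\bx),\qquad |\widetilde{\mathbf{A}}_n(\bx)|\le C|\nabla B(\bx_n)|\,|\bx-\bx_n|^2.
\end{align*}
Let $\phi_k\in\hilbert$ be an $L^2$-normalized Landau eigenfunction of $D_{A_n}$ with eigenvalue $\sigma\sqrt{2k|B_n|}$, where $\sigma:=-\mathrm{sgn}(V_n)$, and let $\chi_n$ be a radial cutoff with $\chi_n\equiv 1$ on $B(\bx_n,r_n/2)$ and supported in $B(\bx_n,r_n)$. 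Set
\begin{align*}
\psi_n(\bx):=\chi_n(\bx)\,e^{\ri\lambda_n(\bx)}\,\phi_k(\bx-\bx_n).
\end{align*}

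The gauge identity $e^{-\ri\lambda_n}\DA e^{\ri\lambda_n}=D_{A_n(\cdot-\bx_n)+\widetilde{\mathbf{A}}_n}$ on $B(\bx_n,r_n)$, together with the eigenvalue equation for $\phi_k$, gives
\begin{align*}
e^{-\ri\lambda_n}H\psi_n=\bigl(\sigma\sqrt{2k|B_n|}+V_n\bigr)\chi_n\phi_k(\cdot-\bx_n)+R_n^{(a)}+R_n^{(b)}+R_n^{(c)},
\end{align*}
where $R_n^{(a)}=-\ri({\boldsymbol\sigma}\cdot\nabla\chi_n)\phi_k(\cdot-\bx_n)$ is supported on the annulus $\{r_n/2<|\bx-\bx_n|<r_n\}$ on which $\phi_k$ is exponentially small at the scale $r_n\sqrt{|B_n|}\to\infty$; $R_n^{(b)}=-({\boldsymbol\sigma}\cdot\widetilde{\mathbf{A}}_n)\psi_n$ has norm $O(|\nabla B(\bx_n)|\,r_n^2)\to 0$; and $R_n^{(c)}=(V-V_n)\psi_n$ has norm $O(|\nabla V(\bx_n)|\,r_n)\to 0$. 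The scalar prefactor obeys
\begin{align*}
\bigl|\sigma\sqrt{2k|B_n|}+V_n\bigr|=\frac{\bigl|V_n^2-2k|B_n|\bigr|}{|V_n|+\sqrt{2k|B_n|}}\le\frac{\bigl|V_n^2-2k|B_n|\bigr|}{|V_n|}\longrightarrow 0
\end{align*}
by \eqref{con2a}, so $\|H\psi_n\|\to 0$. The cutoff captures all but an exponentially small fraction of $\phi_k$'s $L^2$-mass, hence $\|\psi_n\|\to 1$; weak convergence $\psi_n\rightharpoonup 0$ follows because $|\bx_n|\to\infty$ and the supports recede to infinity with uniformly bounded norms. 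Weyl's criterion then delivers $0\in\sigma_{\rm ess}(H)$.

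The principal obstacle is the local gauge construction: producing $\lambda_n$ and $\widetilde{\mathbf{A}}_n$ with the quantitative bound $|\widetilde{\mathbf{A}}_n|\lesssim|\nabla B(\bx_n)|\,|\bx-\bx_n|^2$ using only the first-order information \eqref{con1a} on $B$, and tracking the gauge phase cleanly through the commutator expansion. A secondary bookkeeping task is the simultaneous tuning of $r_n$ against three competing scales---the Landau length $|B_n|^{-1/2}$, the field-variation length $|\nabla B(\bx_n)|^{-1/2}$, and the potential-variation length $|\nabla V(\bx_n)|^{-1}$---which is feasible precisely because the exponent $\varepsilon>0$ in \eqref{con1a} creates a polynomial window between them.
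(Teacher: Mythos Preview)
Your proposal is correct and follows essentially the same route as the paper: a Weyl sequence built from cut-off Landau eigenfunctions of the frozen-field operator $D_{\mathbf{A}_n}$, gauge-transformed back to $\DA$, with the same four-term error decomposition (Landau mismatch, cutoff gradient, gauge remainder, potential variation). The paper makes the gauge step explicit by writing the global Poincar\'e-gauge potential $\widetilde{\mathbf{A}}_n(\bx)=\int_0^1 B(\bx_n+s(\bx-\bx_n))\wedge(\bx-\bx_n)\,s\,\rd s$ and taking $g_n$ with $\nabla g_n=\mathbf{A}-\widetilde{\mathbf{A}}_n$, which immediately yields your bound $|\widetilde{\mathbf{A}}_n-\mathbf{A}_n|\le C|\nabla B(\bx_n)|\,|\bx-\bx_n|^2$; it also fixes the specific eigenfunction $(d_n^*)^k e^{-B_n|\bx|^2/4}$ and chooses $r_n=B_n^{(\varepsilon-1)/2}$, which is (up to constants via $V_n^2\sim 2kB_n$) the same scaling as your $|V_n|^{-1+\varepsilon/2}$.
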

\begin{remark}\label{velocity}
 From \eqref{con0a} and \eqref{con2a} it follows that $|B(\bx_n)|\to \infty$ as well as 
\begin{align*}
  \left| \frac{V^2(\bx_n)}{2|B(\bx_n)|}-k\right| \le  \, \frac{c}{|V(\bx_n)|} \longrightarrow 0 \quad 
\mbox{as}\quad |\bx|\to\infty,
\end{align*}
for some constant $c > 0$. Note, however, that the converse is not in
general true.
\end{remark}
We also remark that, under somewhat different assumptions, Theorem
\ref{thm3a} (with $k=1$) improves the statement of Theorem
\ref{thm2}. The proof of Theorem \ref{thm2} is based on the construction
of an infinite dimensional subspace of the operator domain on which
$H$ stays bounded. This space is constructed using the
zero-modes of the operator $\DA$. The proof of Theorem \ref{thm3a} is,
in contrast, based on the construction of a Weyl sequence of
functions  localized around points $(\bx_n)_{n\in\N}$ where
$|V(\bx_n)| \approx \sqrt{2k|B(\bx_n)|}$, i.e., the points where the
potential $V$ has the same value as the $k-th$ Landau-level of the
magnetic Dirac operator with constant field $B(\bx_n)$. This idea of
$V$ crossing through the (local) Landau-levels can also be used in the
case when $V^2(\bx)/2B(\bx) \to \infty$ as $|\bx| \to \infty$ (along
some sequence) to obtain the following result.
\begin{theorem}\label{thm3}
  Let $B,V\in C^1(\R^2,\C)$ such that $|\nabla V|$ and $|\nabla B|$ vary with rate $\nu\in[0,1]$ at
  infinity. Let ${\bf A}\in
  C^1(\R^2,\R^2)$ such that $B=\curla$ on
  $\R^2$. Assume that there is a sequence
  $(\bx_n)_{n\in\N}$ with $|\bx_n|\to\infty$ as $n\to\infty$ and
  constants $\varepsilon,\alpha,\kappa, B_0>0$ such that, as $n\to\infty$,
\begin{eqnarray}\label{con1}
&& \frac{V^2(\bx_n)-2n|B(\bx_n)|}{V(\bx_n)}\longrightarrow
0,\\
\label{con2}
&&\frac{\nabla B(\bx_n)}{B(\bx_n)}\left(
  \frac{V^2(\bx_n)}{2|B(\bx_n)|}\right)^{1+\varepsilon}, \frac{\nabla V(\bx_n)}{V(\bx_n)}\left(
  \frac{V^2(\bx_n)}{2|B(\bx_n)|}\right)^{1+\varepsilon}
\longrightarrow 0,
\\
\label{con24}
&&
\frac{1}{|\bx_n|^{2 \nu} |B(\bx_n)|} \left(
  \frac{V^2(\bx_n)}{2|B(\bx_n)|}\right)^{1+\varepsilon}\longrightarrow 0,
\end{eqnarray}
and furthermore, for all $n\in \N$, 
  \begin{eqnarray}
   \label{con0}
&& B_0\le|B(\bx_n)|\le \alpha\left(\frac{V^2(\bx_n)}{2|B(\bx_n)|}\right)^{\kappa}.
\end{eqnarray}
Then $0\in \sigma_{\rm ess}(H)$. 
\end{theorem}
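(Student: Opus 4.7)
The strategy is to build a Weyl sequence for $H$ at energy $0$ by cutting off $n$-th Landau-level eigenfunctions of constant-field Dirac operators, localized near $\bx_n$ and tuned so that $V(\bx_n)$ cancels the Landau-level energy up to vanishing error.

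For each $n$ set $B_n:=B(\bx_n)$, $V_n:=V(\bx_n)$, $\lambda_n:=V_n^2/(2|B_n|)$, and introduce the symmetric-gauge vector potential ${\bf A}^{(n)}(\bx):=\tfrac{B_n}{2}\bigl(-(x_2-x_{n,2}),\,x_1-x_{n,1}\bigr)$ with constant curl $B_n$. The operator $D_n:=D_{{\bf A}^{(n)}}$ has pure point spectrum at the Landau levels $\pm\sqrt{2m|B_n|}$, $m\in\N_0$, with explicit Hermite-type eigenfunctions concentrated on the magnetic length scale $|B_n|^{-1/2}$ around $\bx_n$. Pick $E_n\in\{+\sqrt{2n|B_n|},\,-\sqrt{2n|B_n|}\}$ minimizing $|V_n+E_n|$ (so that $|V_n+E_n|=\bigl||V_n|-\sqrt{2n|B_n|}\bigr|$), let $\phi_n$ be a normalized $D_n$-eigenfunction with eigenvalue $E_n$, and fix $\chi\in C_0^\infty(\R^2)$ with $\chi(0)=1$. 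Set
\begin{equation*}
\chi_n(\bx):=\chi\bigl((\bx-\bx_n)/L_n\bigr),\qquad \psi_n:=c_n\chi_n\phi_n,
\end{equation*}
with $c_n$ chosen so that $\|\psi_n\|_\mathcal{H}=1$ and with length scale $L_n:=(\lambda_n/|B_n|)^{1/2}\lambda_n^{\delta}$ for a small $\delta\in(0,\varepsilon/2)$. Passing to a subsequence with pairwise disjoint supports yields $\psi_n\rightharpoonup 0$.

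After a gauge transformation ensuring ${\bf A}(\bx_n)={\bf A}^{(n)}(\bx_n)$, Taylor's theorem gives $|{\bf A}-{\bf A}^{(n)}|\le C\sup_{\bx\in\mathrm{supp}\,\chi_n}|\nabla B(\bx)|\cdot L_n$. The computation
\begin{equation*}
H\psi_n = c_n\chi_n(V_n+E_n)\phi_n + c_n\chi_n(V-V_n)\phi_n + c_n\chi_n\,{\boldsymbol\sigma}\cdot({\bf A}^{(n)}-{\bf A})\phi_n - \ri c_n\,{\boldsymbol\sigma}\cdot(\nabla\chi_n)\,\phi_n
\end{equation*}
reduces matters to four estimates. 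The first factor is bounded by $\bigl|V_n^2-2n|B_n|\bigr|/\bigl(|V_n|+\sqrt{2n|B_n|}\bigr)$, which is $o(1)$ by \eqref{con1}. For the second and third, the rate-$\nu$ hypothesis on $\nabla V$ and $\nabla B$ (valid on $\mathrm{supp}\,\chi_n$ because the lower bound in \eqref{con0} together with \eqref{con24} yields $L_n=o(|\bx_n|^\nu)$) gives $|\nabla V(\bx)|\le C|\nabla V(\bx_n)|$ and $|\nabla B(\bx)|\le C|\nabla B(\bx_n)|$ on $\mathrm{supp}\,\chi_n$, so these two terms are bounded by $C\bigl(|\nabla V(\bx_n)|+|\nabla B(\bx_n)|\bigr)L_n$; smallness follows from \eqref{con2}. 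The fourth is of order $L_n^{-1}$ times the $L^2$-mass of $\phi_n$ outside a ball of radius proportional to $L_n$, which by Hermite tail estimates and the fact that $L_n$ comfortably dominates the characteristic spread $\sqrt{n/|B_n|}\sim(\lambda_n/|B_n|)^{1/2}$ of $\phi_n$ is again controlled via \eqref{con24}.

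The main obstacle is the tension in the choice of $L_n$: it must be large enough to contain the eigenfunction $\phi_n$, whose spread $(\lambda_n/|B_n|)^{1/2}$ grows with $n$, yet small enough for $V$, $B$, and ${\bf A}$ to be replaced by constants on $\mathrm{supp}\,\chi_n$. Hypotheses \eqref{con2}, \eqref{con24}, and \eqref{con0} are crafted precisely to open this window, and sharp uniform-in-$n$ Hermite tail bounds for the Landau eigenfunctions $\phi_n$ will be needed so that the commutator term actually decays rather than degrades as the Landau index grows.
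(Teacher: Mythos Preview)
Your strategy coincides with the paper's: build a Weyl sequence from cut-off $n$-th Landau eigenfunctions of the constant-field operators $D_{{\bf A}^{(n)}}$ centred at $\bx_n$, then estimate the same four error terms; your scale $L_n$ agrees (up to renaming the exponent) with the paper's choice $r_n=\sqrt{2n^{1+\varepsilon}/B_n}$.

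Two technical points in your execution need repair. First, the vector-potential estimate: matching only ${\bf A}(\bx_n)={\bf A}^{(n)}(\bx_n)$ does not control the linear part of ${\bf A}-{\bf A}^{(n)}$ near $\bx_n$ (the symmetric part of its Jacobian there is unrelated to $\nabla B$), and in any case the correct order after a proper gauge choice is $|{\bf A}-{\bf A}^{(n)}|=O\bigl(|\nabla B(\bx_n)|\,L_n^{2}\bigr)$, not $O\bigl(|\nabla B(\bx_n)|\,L_n\bigr)$. The paper obtains this by putting \emph{both} potentials in Poincar\'e gauge about $\bx_n$, so the difference is explicitly $\int_0^1[B(\bx_n+s(\bx-\bx_n))-B_n]\wedge(\bx-\bx_n)\,s\,\rd s$, of size $|\nabla B|\cdot|\bx-\bx_n|^2$. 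The discrepancy matters: with only one power of $L_n$ the resulting bound carries an uncontrolled factor $|B_n|^{1/2}$ that \eqref{con2} and \eqref{con0} do not absorb, whereas $|\nabla B(\bx_n)|L_n^{2}=\tfrac{|\nabla B(\bx_n)|}{B_n}\lambda_n^{1+2\delta}\to0$ follows directly from \eqref{con2}. Second, the cutoff-derivative term is controlled through \eqref{con0} (which caps $L_n^{-2}$ polynomially in $n$) together with the super-polynomial Gaussian tail of the Landau eigenfunction, not through \eqref{con24}; condition \eqref{con24} serves only to ensure $L_n=o(|\bx_n|^\nu)$ so that the rate-$\nu$ hypothesis on $|\nabla V|$ and $|\nabla B|$ is applicable on $\mathrm{supp}\,\chi_n$, exactly as you note elsewhere.
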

\begin{remark}\label{growth}
  Due to \eqref{con1} and \eqref{con0} we have that $|V(\bx_n)|\to \infty$
  and moreover  
\begin{align*}
  \left| \frac{V^2(\bx_n)}{2n|B(\bx_n)|}-1\right| \le  \, \frac{c}{|V(\bx_n)|} \longrightarrow 0 \quad 
\mbox{as}\quad |\bx|\to\infty,
\end{align*} for some constant $c>0$.
Observe  that conditions \eqref{con2} and \eqref{con24} give an upper bound for the
growth of  the ratio $V^2(\bx_n)/|2B(\bx_n)|$.

Note in addition that the theorem is also applicable for  bounded
magnetic field.  In this case the condition
  \eqref{con24} can only be satisfied for $\nu>0$.
\end{remark}
\begin{remark}
  It is easy to see that the regularity conditions on $V$ and $B$ in
  theorems \ref{thm3a} and \ref{thm3}  can be weakened to hold
  only outside some compact set $K\subset \R^2$. Inside $K$ it is
  sufficient that these functions are bounded. The same holds true for
  Theorem \ref{thm1} (compare with Lemma \ref{schneider}).  
\end{remark}
The theorem above can also be used to find other points in the
essential spectrum of $H$. To this end is suffices to find a sequence
$(x_n)_{n\in\N}$ satisfying the conditions of the theorem  for $V-E$ instead of $V$. As
an example we get the following result.
\begin{corollary}\label{thm3-2}
  Let $B,V\in C^1(\R^2,\R)$ such that $|\nabla V|$ and $|\nabla B|$
  vary with rate $s\in [0,1]$ at
  infinity. Let ${\bf A}\in
  C^1(\R^2,\R^2)$ such that $B=\curla$ on
  $\R^2$.  Assume that there is a continuous path
  $\gamma:\R^+\to\R^2$ with $|\gamma(t)|\to\infty,$ as $ t\to\infty,$ 
such that 
\begin{equation}
    \label{eq:27}
    \frac{V^2(\gamma(t))}{2B(\gamma(t))} \longrightarrow \infty \quad
    \mbox{as} \quad t\to\infty.
  \end{equation}
  Moreover, assume that the conditions \eqref{con2}, \eqref{con24}, and
  \eqref{con0} 
are satisfied for any sequence $(\bx_n)_{n\in\N}$ in
  the range of $\gamma$ with $|\bx_n|\to \infty$.  Then $\sigma(H)=\R$. 
\end{corollary}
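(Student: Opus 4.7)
My plan is to reduce the statement to Theorem \ref{thm3} by a shift of the potential: for arbitrary $E\in\R$ we have $E\in\sigma_{\rm ess}(H)$ if and only if $0\in\sigma_{\rm ess}(H-E)=\sigma_{\rm ess}(\DA+(V-E))$, so it suffices to verify, for each fixed $E$, the hypotheses of Theorem \ref{thm3} with $V$ replaced by $V-E$. Since $\nabla(V-E)=\nabla V$ and the magnetic data are untouched, the smoothness and rate-of-variation hypotheses pass through unchanged, and the task reduces to producing, for each $E$, a sequence $(\bx_n)_{n\in\N}$ lying in the range of $\gamma$ with $|\bx_n|\to\infty$ along which \eqref{con1}--\eqref{con0} hold with $V-E$ in place of $V$.

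The sequence will be built by a continuity argument. The lower bound $|B(\bx)|\ge B_0$ imposed in \eqref{con0} together with \eqref{eq:27} forces $|V(\gamma(t))|\to \infty$, so the continuous function
\[
 f(t):=\frac{(V(\gamma(t))-E)^2}{2|B(\gamma(t))|}
\]
tends to infinity as $t\to\infty$. By the intermediate value theorem I can pick, for all sufficiently large $n\in\N$, a time $t_n$ (say the largest) with $f(t_n)=n$; then $t_n\to\infty$ and hence $\bx_n:=\gamma(t_n)$ is a sequence in $\gamma(\R^+)$ with $|\bx_n|\to\infty$. Condition \eqref{con1} applied to $V-E$ is then automatic, since its numerator vanishes identically along $(\bx_n)$.

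It then remains to check that the assumed versions of \eqref{con2}, \eqref{con24} and \eqref{con0} --- which by hypothesis hold along \emph{every} sequence in $\gamma(\R^+)$ escaping to infinity --- imply their counterparts with $V-E$. Because $|V(\bx_n)|\to\infty$ the quotient $(V(\bx_n)-E)/V(\bx_n)$ tends to $1$, so the factor $((V-E)/V)^{2(1+\varepsilon)}$ is bounded and converges to $1$; inserting it on the left-hand sides of \eqref{con2} and \eqref{con24} converts them into the shifted form, while the upper bound in \eqref{con0} is absorbed into a slightly enlarged constant $\alpha$, and its lower bound is untouched. Theorem \ref{thm3} then yields $0\in\sigma_{\rm ess}(H-E)$ and hence $E\in\sigma_{\rm ess}(H)$, proving $\sigma(H)=\R$.

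The main obstacle is essentially bookkeeping: ensuring that the sequence produced by the intermediate value theorem genuinely escapes to infinity (handled by choosing the \emph{largest} preimage $t_n$ of $n$) and that each of the asymptotic conditions is robust under the shift $V\mapsto V-E$. Both points hinge on the single structural input that $|B|$ stays bounded below along $\gamma$ and $V^2/(2|B|)$ diverges there, which together force $|V|\to\infty$ along $\gamma$ and render the shift by the fixed constant $E$ negligible in every relevant ratio.
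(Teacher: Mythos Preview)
Your proof is correct and follows essentially the same route as the paper's own argument: shift the potential by $E$, use the intermediate value theorem along $\gamma$ to produce points where $(V-E)^2=2n|B|$, and then note that since $|V(\bx_n)|\to\infty$ the conditions \eqref{con2}, \eqref{con24}, \eqref{con0} transfer from $V$ to $V-E$. Your write-up is in fact more careful than the paper's (which leaves the construction of $(\bx_n)$ and the transfer of the conditions to the reader); your precaution of taking the largest preimage $t_n$ is harmless but unnecessary, since $f(t_n)=n\to\infty$ already forces $t_n\to\infty$ for any choice of preimage.
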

For completeness we give a proof of this corollary.
\begin{proof}
  Let $E\in \R$. Due to \eqref{eq:27} and the continuity of $V^2/B$
  along $\gamma$ we find a constant $N>0$ and a sequence
  $(\bx_n)_{n\in\N}$, on the range of $\gamma$ with $|\bx_n|\to \infty$,
  such that
\begin{equation}
  \label{eq:23}
  (V-E)^2(\bx_n)=2nB(\bx_n),\quad\mbox{for all} \quad n >N.
\end{equation}
Since $|V(\bx_n)|\to\infty$ as $n\to\infty$ it is clear that the 
conditions \eqref{con2}, \eqref{con24}, and
  \eqref{con0}  are also satisfied for $V(\bx_n)-E$ instead of
  $V(\bx_n)$. This implies the claim.
\end{proof}
\begin{remark}
  As we already mentioned in the Introduction one can combine
  Corollary \ref{thm3-2} and Proposition \ref{purepoint} from the
  Appendix for functions $V$ and $B$ that are rotationally
  symmetric. In this case one obtains that $\sigma(H)=\R$ is pure
  point, i.e., $H$ has dense pure point spectrum (see Example
  \ref{mainexample} bellow). Note that the same type of spectral
  phenomenon occurs for $\sigma(\DA)$ when $B$ is rotationally
  symmetric and decays at infinity but $B(\bx)|\bx|\to \infty$ as
  $|\bx|\to \infty$ \cite{miller1980quantum} (see also
  \cite[pp. 208]{Thaller}).
\end{remark}
Let us now apply Corollary \ref{thm3-2} to some particular cases.
\begin{example}\label{mainexample}
  Let $V(\bx) = V_0|\bx|^t$ and  $B(\bx) = B_0|\bx|^s$ 
  for some constants $V_0,B_0>0$ and $0\le s <2t$. Then $|\nabla V|$ and $|\nabla B|$ vary with
  any rate $\nu\in [0,1]$ at infinity  and we have that $V^2(\bx)/(2B(\bx))\to\infty$ as
  $|\bx|\to \infty$. Condition \eqref{con2} is satisfied for any
  sequence if and only if 
  $$\frac{1}{|\bx|}\left(
  \frac{V^2(\bx)}{2|B(\bx)|}\right)^{1+\varepsilon}\longrightarrow
0\quad  \mbox{as} \quad|\bx| \to \infty,$$
 which is the case whenever
  $2t < s +1$. For these exponents  \eqref{con24} and \eqref{con0} are clearly
  fulfilled for any sequence which tends to infinity.
 Hence, Corollary \ref{thm3-2} states that for $0\le s < 2t< s + 1$
  we have $\sigma(H)=\R$. Furthermore, in view of Proposition
  \ref{purepoint} we get that the spectrum in this case is pure point.
\end{example} 
\section{Supersymmetry and zero modes}\label{susy}
In this section we recall some basic facts about magnetic Dirac
operator which are going to be useful in our proofs.  As we mentioned
in the Section \ref{mainresults} the operator $\DA$ defined in \eqref{eq:3} is
essentially self-adjoint on $\core$.  We write
\begin{align}
  \label{eq:5}
  \DA=:\left(
\begin{array}{cc}
0&d^*\\
d&0  
  \end{array}
  \right),
\end{align}
where 
$$d:=\overline{p_1-A_1
+\ri(p_2-A_2)\upharpoonright}_{C^\infty_0(\R^2,\C)} $$ 
 and $p_j=-\ri \partial_j, \,j=1,2$, is   the momentum operator in the
$j$-th direction. The operators $d, d^*$ satisfy the commutation relation 
\begin{align}
  \label{eq:1}
  [d,d^*]\varphi:=(dd^*-d^*d)\varphi=2B\varphi\,, \quad \varphi\in\mathcal{D}(d^*d)
\cap \mathcal{D}(dd^*).
\end{align}
We now investigate further the relation between $dd^*$ and $d^* d$.  We
 note that the of kernel  $\DA$ fulfills ${\rm ker} (\DA)= {\rm
  ker}(d)\oplus {\rm ker}(d^*)$. Due to
the matrix structure of $\DA$ we have  that
\begin{align}
  \label{eq:8}
  {\rm sgn}({\DA}):=\frac{\DA}{|\DA|}=\left(\begin{array}{cc}
0&s^*\\
s&0  
\end{array}\right)
  \quad \mbox{on}\quad {\rm ker} (\DA)^\perp,
\end{align}
with the  unitary operators 
\begin{align}\label{super0}
  &s:{\rm ker}(d)^\perp\to{\rm ker}(d^*)^\perp, \qquad s^*: {\rm ker}(d^*)^\perp\to {\rm ker}(d)^\perp
\end{align}
(see \cite[Section 5.2.3]{Thaller} for a related discussion). Using the operator identity $\DA^2=
\SA\DA^2\SA$ we find, for any $\varphi=(\varphi_1,\varphi_2)^{\rm T}$
with $\varphi_1\in \mathcal{D}({d^*d})\cap {\rm ker}(d)^\perp$ and
$\varphi_2\in \mathcal{D}({dd^*})\cap {\rm ker}(d^*)^\perp$,
that 
\begin{align}
\label{super}
\left(
 \begin{array}{cc}
d^*d&0\\
0&dd^*  
  \end{array}
  \right)\varphi = \left(\begin{array}{cc}
s^*dd^*s&0\\
0&sd^*ds^*  
  \end{array}
  \right) \varphi.
\end{align}

We now recall some results concerning the structure of the kernel of
$\DA$. For a given  $B\in C(\R^2,\R)$  one finds a weak solution $\phi\in
C^1(\R^2,\R) $ of the equation
$$\Delta \phi = B,$$ 
see, e.g. \cite{Erdoes2002} where a much larger class of fields $B$ is
considered. (Using standard elliptic regularity it is easy to see that
if $B$ belongs to some H\"older class the solution $\phi\in C^2(\R^2,\R)$).
A direct computation yields
\begin{align}
  \label{eq:7}
  {\rm  ker} (d^*d)=\{\omega e^{-\phi}\in L^2(\R^2,\C)\,|\,\omega\,\, \mbox{is
    entire in}\,\, x_1+\ri x_2\}.
\end{align}
Moreover, 
it is known  \cite{Rozenblum2006}  that whenever 
\begin{align}
  \label{eq:6}
  \int_{\R^2} [B(\bx)]_+ d\bx=\infty,\quad  \int_{\R^2} [B(\bx)]_- d\bx<\infty,
\end{align}
zero is an eigenvalue of infinite multiplicity of $d^*d$, i.e., 
 ${\rm dim\, ker} (d^*d)=\infty$.

With these observations we can easily go through the following example
that will be useful later on.
\begin{lemma}\label{lemma1}
  Let $B\in C(\R^2,\R)$ such that $B\ge B_0>0$ and 
${\bf A}\in C^1(\R^2,\R^2)$ with
  $B=\curla$. Then  $0$ is an
  isolated  eigenvalue of
infinite multiplicity of $\DA$. In addition, 
\begin{align}\label{zero}
  {\rm  ker}(\DA)=\Bigg\{ \left(
\begin{array}{c}
\Omega\\
0
\end{array}
\right)\in \hilbert\,|\, \Omega\in {\rm  ker} (d^*d)\Bigg\}\,.
\end{align}
Moreover, 
\begin{align*}
  (-\sqrt{2B_0},0)\cup(0,\sqrt{2B_0})\subset \varrho(\DA), 
\end{align*}
where $\varrho(\DA)$ denotes the resolvent set of $\DA$. 
\end{lemma}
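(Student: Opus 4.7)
The plan is to exploit the block-off-diagonal structure \eqref{eq:5} of $\DA$ and reduce everything to spectral facts about the auxiliary scalar operators $d^*d$ and $dd^*$. Since $\DA^2 = \mathrm{diag}(d^*d, dd^*)$ on the appropriate domain, a vector $\varphi=(\varphi_1,\varphi_2)^{\mathrm{T}}$ lies in $\ker\DA$ exactly when $\varphi_1\in\ker d$ and $\varphi_2\in\ker d^*$. Hence \eqref{zero} will be established as soon as I show that $\ker d^*=\{0\}$, and the remaining assertions will follow from a matching spectral gap for $dd^*$ combined with the supersymmetry identities \eqref{super0}--\eqref{super}.

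The core estimate is $dd^*\ge 2B_0\, I$. Starting from \eqref{eq:1}, I would compute on $\varphi\in C_0^\infty(\R^2,\C)$:
\[
\|d^*\varphi\|^2-\|d\varphi\|^2=\sps{\varphi}{[d,d^*]\varphi}=2\sps{\varphi}{B\varphi}\ge 2B_0\|\varphi\|^2,
\]
and then extend the resulting inequality $\|d^*\varphi\|^2\ge 2B_0\|\varphi\|^2$ from $C_0^\infty(\R^2,\C)$ to the form domain of $dd^*$ by density, using essential self-adjointness of $\DA$ on $\core$ to ensure that $C_0^\infty$ supplies a form core. This will immediately yield $\ker d^*=\ker(dd^*)=\{0\}$, establishing \eqref{zero}, and also give $\sigma(dd^*)\subset[2B_0,\infty)$.

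With $\ker d^*=\{0\}$ in hand, the operator $s^*$ from \eqref{super0} becomes a unitary $L^2(\R^2,\C)\to\ker(d)^\perp$, and \eqref{super} provides the intertwining $d^*d\upharpoonright_{\ker(d)^\perp}=s^*\, dd^*\, s$. Consequently $\sigma(d^*d\upharpoonright_{\ker(d)^\perp})=\sigma(dd^*)\subset[2B_0,\infty)$, so $\sigma(\DA^2)\subset\{0\}\cup[2B_0,\infty)$, and the spectral mapping theorem yields $\sigma(\DA)\subset\{0\}\cup(-\infty,-\sqrt{2B_0}\,]\cup[\sqrt{2B_0},\infty)$; both the isolation of $0$ and the claimed resolvent-set inclusion follow at once. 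For the infinite multiplicity, since $B\ge B_0>0$ the hypothesis \eqref{eq:6} holds trivially, so the Rozenblum result quoted between \eqref{eq:7} and the lemma gives $\dim\ker(d^*d)=\dim\ker d=\infty$. The main technical point I anticipate is the rigorous upgrade of the commutator identity on $C_0^\infty$ to the operator bound $dd^*\ge 2B_0\,I$, which hinges on $C_0^\infty$ being a form core for $dd^*$; essential self-adjointness of $\DA$ on $\core$ combined with standard arguments for block-off-diagonal Dirac operators should make this step routine.
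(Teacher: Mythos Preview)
Your proposal is correct and follows essentially the same route as the paper: establish the operator bound $dd^*\ge 2B_0$ from the commutation relation \eqref{eq:1}, deduce $\ker d^*=\{0\}$ and hence \eqref{zero}, then use isospectrality of $d^*d$ and $dd^*$ away from zero together with $\DA^2=\mathrm{diag}(d^*d,dd^*)$ to obtain the spectral gap, and invoke \eqref{eq:6} for infinite multiplicity. The only cosmetic difference is that the paper cites the standard isospectrality $\sigma(d^*d)\setminus\{0\}=\sigma(dd^*)\setminus\{0\}$ directly rather than going through the explicit unitaries $s,s^*$ of \eqref{super0}--\eqref{super}; your form-core concern is legitimate but the paper does not dwell on it either.
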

\begin{proof}
Due to
\eqref{eq:1} we have the operator inequality 
\begin{align}
  \label{eq:10}
  dd^* \ge 2B\ge 2B_0.
\end{align}
Hence, $ {\rm  ker} (d^*)= {\rm  ker} (dd^*)=\{0\}$ and we find
\begin{align*}
   {\rm  ker} (\DA)=  {\rm  ker} (d)\oplus   {\rm  ker} (d^*)
={\rm  ker} (d^*d)\oplus  \{0\}.
\end{align*}
This yields  \eqref{zero}.
Moreover, using again \eqref{eq:10}, the isospectrality
\begin{align*}
  \sigma(d^*d)\setminus\{0\}= \sigma(dd^*)\setminus\{0\},
\end{align*}
together with
\begin{align*}
  \DA^2\varphi=\left(\begin{array}{cc}
d^*d & 0\\
0& d d^*
\end{array}
\right)\varphi,\quad \varphi\in \core,
\end{align*}
imply  that $(0,2B_0)\subset \varrho(\DA^2)$. Therefore,  by
the spectral theorem, we find the desired spectral gap.
\end{proof}
\begin{remark}
If we assume further that $B(\bx)\to \infty$ as $|\bx|\to \infty$ we
have  by   \cite{HelfferNourrigatWang1989} that the spectrum of $\DA$
is discrete away from zero. Since $B$ fulfills in this case \eqref{eq:6} this implies
that $\sigma_{\rm ess}(\DA)=\{0\}$.
\end{remark}
\section{Useful commutator estimates}\label{usefull}
We denote by $P_0$ the orthogonal projection onto ${\rm ker} (\DA)$ and
set $P_0^\perp:=1-P_0$.  In this section we show some commutator
bounds between the electric potential $V$, $P_0$, and the sign of
$\DA$ denoted by ${\rm sgn}(\DA)$. We use these bounds in Section
\ref{prooft1} to show Theorem \ref{thm1}.

Throughout this section we use the following notation:
For $0<V\in C^1(\R^2,\R)$ such that $\|\nabla V/V\|_\infty<\infty$ we
set
\begin{equation*}
T := \frac{\ri{\boldsymbol\sigma}\cdot \nabla V}{V}.
\end{equation*}
Note that $T$  formally equals  $[\DA, V^{-1}]V$, where
 $[\cdot, \cdot]$ is the symbol for the commutator.
We define 
for $z\in \varrho(\DA)$ 
\begin{align}
  \label{eq:9}
  R_{\bf A}(z):=(\DA-z)^{-1}.
\end{align}
\begin{lemma}\label{resolvent1}
Let $0<V\in C^1(\R^2,\R)$, $B\in C(\R^2,\R)$, and 
${\bf A}\in C^1(\R^2,\R^2)$ with
  $B=\curla$. Assume further that $\|V^{-1}\|_\infty, \|\nabla
  V/V\|_\infty<\infty $ and that  $z\in
  \varrho(\DA)$ is such that $\| T R_{\bf A}(z)\|<1$. Then
  \begin{align}
    \label{eq:11}
   \left[ R_{\bf A}(z),V^{-1} \right] &=- V^{-1}  R_{\bf A}(z) T R_{\bf A}(z) \Theta_r(z)\\
\label{eq:11.1}&=\Theta_l(z) R_{\bf A}(z) T  R_{\bf A}(z) V^{-1}, 
  \end{align}
where 
\begin{align*}
  \Theta_r(z)&:= \big( 1+ T R_{\bf A}(z) \big)^{-1},\\
\Theta_l(z)&:= \big( 1-  R_{\bf A}(z) T  \big)^{-1},
\end{align*}
and 
\begin{align}
  \label{eq:11.2} 
\|\Theta_r(z)\|, \|\Theta_l(z)\|\le \big( 1-\|T R_{\bf A}(z)\| \big)^{-1}.
\end{align}
\end{lemma}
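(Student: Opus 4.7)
The plan is to reduce everything to the basic commutator identity
\begin{equation*}
[\DA, V^{-1}]=T\,V^{-1}
\end{equation*}
and then to perform two slightly different algebraic rearrangements of the second resolvent formula, each ending with a Neumann series inversion.

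A direct computation on $\core$, using that the multiplication operator $\mathbf{A}$ commutes with $V^{-1}$ and that $\nabla V^{-1}=-V^{-2}\nabla V$, gives $[\DA, V^{-1}]=-\ri\,\boldsymbol{\sigma}\cdot \nabla V^{-1}=T\,V^{-1}$. Writing $K:=[R_{\bf A}(z),V^{-1}]$, the elementary identity $[B^{-1},C]=-B^{-1}[B,C]B^{-1}$ applied with $B=\DA-z$ then immediately yields
\begin{equation*}
K=-R_{\bf A}(z)\, T\, V^{-1}\, R_{\bf A}(z).
\end{equation*}

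To derive \eqref{eq:11} I would commute $V^{-1}$ out to the left. Since $T$ and $V^{-1}$ commute as matrix-valued multiplication operators, $T V^{-1}=V^{-1} T$, and using $R_{\bf A}(z)V^{-1}=V^{-1}R_{\bf A}(z)+K$ to transport $V^{-1}$ through the outer $R_{\bf A}(z)$ produces the functional equation
\begin{equation*}
K\bigl(1+T R_{\bf A}(z)\bigr)= -V^{-1}\,R_{\bf A}(z)\, T\, R_{\bf A}(z).
\end{equation*}
Multiplying on the right by $\Theta_r(z)$ then delivers \eqref{eq:11}. Symmetrically, substituting $V^{-1}R_{\bf A}(z)=R_{\bf A}(z)V^{-1}-K$ in the middle of $K=-R_{\bf A}\,T\,V^{-1}\,R_{\bf A}$ will give $(1-R_{\bf A}(z)T)K=-R_{\bf A}(z)T R_{\bf A}(z)V^{-1}$, and inverting by $\Theta_l(z)$ yields \eqref{eq:11.1}.

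The estimate $\|\Theta_r(z)\|\le (1-\|T R_{\bf A}(z)\|)^{-1}$ in \eqref{eq:11.2} is the standard Neumann series bound for $(1+T R_{\bf A}(z))^{-1}$. The corresponding bound for $\|\Theta_l(z)\|$ is the one point that requires some care, because a priori $\|R_{\bf A}(z)T\|$ need not equal $\|T R_{\bf A}(z)\|$ and one cannot sum the Neumann series in $R_{\bf A}T$ directly. The natural remedy is the algebraic identity
\begin{equation*}
(1-R_{\bf A}T)^{-1}=1+R_{\bf A}(1-T R_{\bf A})^{-1}T,
\end{equation*}
which re-expresses $\Theta_l(z)$ through a Neumann series in $T R_{\bf A}(z)$ and is where I expect the main bookkeeping effort. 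All remaining steps are purely algebraic consequences of $[\DA,V^{-1}]=T V^{-1}$ together with the commutativity of $T$ and $V^{-1}$.
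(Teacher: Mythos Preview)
Your argument is essentially the paper's own: compute $[\DA,V^{-1}]=V^{-1}T$, pass to $K=-R_{\bf A}V^{-1}TR_{\bf A}$, commute $V^{-1}$ outwards to obtain the functional equations $K(1+TR_{\bf A})=-V^{-1}R_{\bf A}TR_{\bf A}$ and $(1-R_{\bf A}T)K=-R_{\bf A}TR_{\bf A}V^{-1}$, and invert by Neumann series. (Your second equation in fact gives $K=-\Theta_l R_{\bf A}TR_{\bf A}V^{-1}$, so the sign in \eqref{eq:11.1} as printed looks like a typo; your computation is the correct one.)

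Where you go beyond the paper is the $\Theta_l$ bound. The paper simply writes ``follows similarly'', but you are right that $\|R_{\bf A}T\|$ need not equal $\|TR_{\bf A}\|$, and in fact simple $2\times 2$ examples show that $\|(1-BA)^{-1}\|\le(1-\|AB\|)^{-1}$ can fail. Your push-through identity $(1-R_{\bf A}T)^{-1}=1+R_{\bf A}(1-TR_{\bf A})^{-1}T$ does secure the existence of $\Theta_l$ and yields $\|\Theta_l\|\le 1+\|R_{\bf A}\|\|T\|\,(1-\|TR_{\bf A}\|)^{-1}$. This is not literally the bound \eqref{eq:11.2}, but it is equally sufficient: every subsequent use of the lemma (Lemmas~\ref{lemmakey} and~\ref{lemmakey2}) estimates $\|TR_{\bf A}(z)\|$ via the cruder product $\|T\|\|R_{\bf A}(z)\|$ anyway, under which your bound and \eqref{eq:11.2} coincide.
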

\begin{proof}
 For $z\in \varrho(\DA)$ we get the following
  relation on $\mathcal{H}$
\begin{align*}
  \left[ R_{\bf A}(z),V^{-1} \right] &= R_{\bf A}(z)\left[ V^{-1},\DA \right] R_{\bf A}(z)\\
&= - R_{\bf A}(z) V^{-1} T R_{\bf A}(z)\\
&= - V^{-1}  R_{\bf A}(z) T R_{\bf A}(z) - \left[ R_{\bf A}(z), V^{-1} \right] T R_{\bf A}(z).
\end{align*}
From this follows that 
\begin{align*}
   \left[ R_{\bf A}(z),V^{-1} \right] \big(1+ T R_{\bf A}(z) \big)=
   -  V^{-1}  R_{\bf A}(z) T R_{\bf A}(z) .
\end{align*}
Since $\|T R_{\bf A}(z))\|<1$ we get that $\pm1\in \varrho ( T R_{\bf A}(z))$ by the
  Neumann series. Therefore,  we get
  the desired expression \eqref{eq:11} and the estimate on
  $\Theta_r(z)$. Equation \eqref{eq:11.1} and the bound for
  $\Theta_l(z)$  follows similarly. 
\end{proof}
As stated in Lemma \ref{schneider} in the appendix we can reduce the
proof of 
Theorem  \ref{thm1} for potentials $V\in C^1(\R^2,\R)$ and the magnetic fields $B\in
C(\R^2,\R)$ satisfying the following conditions: There exist constants
$\delta,\eta\in (0,1)$ such that, for all $\bx\in\R$,
\begin{align}
\label{c1} &V(\bx)\ge 1/\delta,\\
\label{c2} &|\nabla V(\bx)|\le \delta V(\bx),\\
\label{c3} &V^2(\bx)\le 2(1-\eta)B(\bx).
\end{align} 
Due to Lemma \ref{lemma1} we see that under these
assumptions $0$ is an isolated eigenvalue of $\DA$ of infinite
multiplicity and that $\sigma(\DA)\setminus\{0\}$ is
discrete. Moreover, we find a spectral gap  $(-2\beta_0,0)\cup(0,2\beta_0)\subset
\varrho(\DA)$, where
\begin{align}
  \label{eq:14}
\beta_0:= \big(2\delta\sqrt{1-\eta} \ \big)^{-1}.  
\end{align}
\begin{lemma}\label{lemmakey}
Let $V\in C^1(\R^2,\R)$, $B\in C(\R^2,\R)$, and 
${\bf A}\in C^1(\R^2,\R^2)$ with
  $B=\curla$. Assume further that the
  conditions \eqref{c1}-\eqref{c3} are fulfilled for
  $\delta\in(0,\frac{1}{2})$ and $\eta\in(0,1)$. Then, we have
  \begin{itemize}
  \item[(a)] The operators $\left[ P_0^\perp,V^{-1} \right] V$ and
    $V\left[ P_0^\perp,V^{-1}\right]$ are well-defined on  $\core$ and  extend to
    bounded operators on $\mathcal H$ with 
    \begin{align}
      \label{eq:13}
      \left \lVert V\left[P_0^\perp,V^{-1}\right]\right \rVert , \left \lVert \left[P_0^\perp,V^{-1}\right] V\right \rVert
       \le 4 \delta^2.    
    \end{align}
The same holds true if we replace $P_0^\perp$ above by $P_0$. \\
\item[(b)] $P_0\mathcal{D}(V), P_0^\perp \mathcal{D}(V) \subset
  \mathcal{D}(V) $. \\
\item[(c)] The operator $V\left[P_0^\perp,V^{-1}\right]$ maps $\core$ in
  $\mathcal{D}(\DA)$.  Moreover, we have 
  \begin{equation}
    \label{eq:18}
      \left \lVert \DA V\left[P_0^\perp,V^{-1}\right] \right \rVert \le  4 \delta.
  \end{equation} 
  \end{itemize}
\end{lemma}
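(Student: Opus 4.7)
The plan is to represent the spectral projection $P_0$ as a resolvent contour integral and to apply Lemma~\ref{resolvent1} inside this integral. By Lemma~\ref{lemma1} the gap $(-2\beta_0,0)\cup(0,2\beta_0)$ with $\beta_0=(2\delta\sqrt{1-\eta})^{-1}$ is contained in $\varrho(\DA)$, so I would use the contour $\Gamma:=\{z\in\C:|z|=\beta_0\}$ on which $\|R_{\bf A}(z)\|\le\beta_0^{-1}$ and
\begin{equation*}
P_0=-\tfrac{1}{2\pi\mathrm{i}}\oint_\Gamma R_{\bf A}(z)\,dz.
\end{equation*}
The operator $T$ is a matrix-valued multiplication with pointwise norm $|\nabla V|/V$, so $\|T\|\le\delta$ by \eqref{c2}. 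Hence $\|TR_{\bf A}(z)\|\le\delta/\beta_0=2\delta^2\sqrt{1-\eta}<1$ on $\Gamma$ whenever $\delta\in(0,\tfrac12)$, and Lemma~\ref{resolvent1} applies uniformly along $\Gamma$ with $\|\Theta_r(z)\|,\|\Theta_l(z)\|\le 2$.

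For part~(a), the crucial observation is that multiplying identity \eqref{eq:11} by $V$ on the left produces the cancellation $VV^{-1}=1$, turning $V[R_{\bf A}(z),V^{-1}]$ into the manifestly bounded operator $-R_{\bf A}(z)TR_{\bf A}(z)\Theta_r(z)$. Concretely, Lemma~\ref{resolvent1} shows that for every $\phi\in\mathcal H$ the vector $[R_{\bf A}(z),V^{-1}]\phi=-V^{-1}R_{\bf A}(z)TR_{\bf A}(z)\Theta_r(z)\phi$ lies in $\mathcal D(V)$ with $V$-image continuous in $z\in\Gamma$, so closedness of $V$ allows exchanging $V$ with the contour integral. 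Using $[P_0^\perp,V^{-1}]=-[P_0,V^{-1}]$ I then bound
\begin{equation*}
\|V[P_0^\perp,V^{-1}]\|\le\tfrac{|\Gamma|}{2\pi}\sup_{z\in\Gamma}\|R_{\bf A}(z)TR_{\bf A}(z)\Theta_r(z)\|\le\beta_0\cdot\tfrac{2\delta}{\beta_0^2}=4\delta^2\sqrt{1-\eta}\le 4\delta^2.
\end{equation*}
The bound for $[P_0^\perp,V^{-1}]V$ follows symmetrically from \eqref{eq:11.1} and the right cancellation $V^{-1}V=1$ on $\mathcal D(V)\supset\core$.

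Part~(b) is then a direct corollary of (a): for $\psi\in\mathcal D(V)$ set $\phi:=V\psi$ and decompose $P_0\psi=V^{-1}P_0\phi+[P_0,V^{-1}]\phi$; the first summand lies in $\mathcal D(V)$ because $V(V^{-1}P_0\phi)=P_0\phi\in L^2$, and the second lies in $\mathcal D(V)$ by the same domain argument used in part~(a). For part~(c) I would use the identity $\DA R_{\bf A}(z)=1+zR_{\bf A}(z)$, which is bounded by $2$ on $\Gamma$. Consequently $\|\DA R_{\bf A}(z)TR_{\bf A}(z)\Theta_r(z)\|\le 2\cdot\delta\cdot\beta_0^{-1}\cdot 2=4\delta/\beta_0$ uniformly on $\Gamma$, and closedness of $\DA$ again permits exchanging $\DA$ with the contour integral, yielding
\begin{equation*}
\|\DA V[P_0^\perp,V^{-1}]\|\le\tfrac{|\Gamma|}{2\pi}\cdot\tfrac{4\delta}{\beta_0}=\beta_0\cdot\tfrac{4\delta}{\beta_0}=4\delta.
\end{equation*}

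The only serious obstacle is domain bookkeeping rather than arithmetic: one must verify that the $V^{-1}$-factor on the right of \eqref{eq:11} keeps $[R_{\bf A}(z),V^{-1}]\phi$ in $\mathcal D(V)$ with a $z$-continuous $V$-image, and analogously for its $\DA$-image in part~(c); once this is settled, closedness of $V$ and $\DA$ legitimizes the key interchange with the Bochner integral over $\Gamma$ and the estimates above close all three parts.
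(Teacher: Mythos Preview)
Your proposal is correct and follows essentially the same route as the paper: the contour representation of $P_0$ on the circle $|z|=\beta_0$, the key input $\|TR_{\bf A}(z)\|\le\delta/\beta_0\le\tfrac12$ feeding into Lemma~\ref{resolvent1}, and the identical arithmetic leading to $4\delta^2$ in~(a) and $4\delta$ in~(c). The only cosmetic difference is that the paper phrases everything in weak form via test vectors $\varphi,\psi\in\core$ (thereby sidestepping the Bochner--closedness interchange you flag at the end), and in~(c) it invokes the spectral theorem for $\|\DA R_{\bf A}(z)\|\le 2$ rather than your resolvent identity $\DA R_{\bf A}(z)=1+zR_{\bf A}(z)$---but these give the same bound and the overall strategy is the same.
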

\begin{proof}
{\it Part (a):} Let $\varphi,\psi\in \core$ with
$\|\varphi\|=\|\psi\|=1$. Using the representation formula for
the spectral projection and  that $(-2\beta_0,0)\cup(0,2\beta_0)\subset
\varrho(\DA)$ we have 
\begin{align*}
  P_0=-\frac{1}{2\pi\ri} \int_{|z|=\beta_0} R_{\bf A}(z)  \rd z. 
\end{align*}
Using  the estimate $\|R_{\bf A}(z)\|\le 1/\beta_0$, condition \eqref{c2}, and
that $\delta\in (0,\frac{1}{2})$, we get  for $|z|=\beta_0$, 
\begin{equation*}
\|T R_{\bf A}(z))\| \le \|T \| \|R_{\bf A}(z)\| \le \delta \beta_0^{-1}
= 2\delta^2 \sqrt{(1-\eta)} \le 1/{2}.
\end{equation*}
Thus, applying   Lemma \ref{resolvent1} we get 
\begin{align*}
  |\sps{V\varphi}{\left[ P_0^\perp,V^{-1}\right] \psi}|&=
  |\sps{V\varphi}{\left[ P_0,V^{-1} \right] \psi}|\\
  &\le \frac{1}{2\pi}\int_{|z|=\beta_0} |\sps{V\varphi}{\left[ R_{\bf A}(z),V^{-1} \right] \psi}| \ \mbox{d}z\\
&=\frac{1}{2\pi}\int_{|z|=\beta_0} |\sps{\varphi}{R_{\bf A}(z) T R_{\bf A}(z) \Theta_r(z)\psi}| \ \mbox{d}z\\
  &\le \frac{1}{2\pi} \int_{|z|=\beta_0} \frac{\|T\| 
\|R_{\bf A}(z)\|^2}{1-\|T\| \|R_{\bf A}(z)\|} \mbox{d}z \ \le 4 \delta^2 .
\end{align*}
The estimate for
$\left[ P_0^\perp,V^{-1}\right] V$ follows analogously.

{\it Part (b):} From the previous computation we see that $\left[ P_0^\perp,V^{-1} \right]$ maps $\mathcal H$ on
$\mathcal{D}(V)$. The claim is therefore an  immediate  consequence of the identity 
\begin{align*}
  P_0\varphi=V^{-1}P_0V\varphi+ \left[ P_0,V^{-1} \right] \varphi,\quad \varphi\in \mathcal{D}(V).
\end{align*}

{\it Part (c):}  By the spectral theorem we get, for $|z|=\beta_0$, 
\begin{align*}
  \|\DA R_{\bf A}(z) \|=\sup_{\lambda\in
    \sigma(\DA)}\Big|\frac{\lambda}{\lambda-z}\Big|\le 2.
\end{align*}
Proceeding similarly as in the proof of part (a) we have, for  $\varphi,\psi\in \core$ with $\|\varphi\|=\|\psi\|=1$,
\begin{align*}
   |\sps{\DA\varphi}{V[P_0^\perp,V^{-1}]\psi}|&\le \frac{1}{2\pi}\int_{|z|=\beta_0} |\sps{\varphi}{\DA R_{\bf A}(z) 
                          TR_{\bf A}(z) \Theta_r(z)\psi}| \ \mbox{d}z 
  \le  \ 4\delta.
\end{align*}
\end{proof}
\begin{lemma}\label{lemmakey2} 
  Let $V\in C^1(\R^2,\R)$, $B\in C(\R^2,\R)$, and ${\bf A}\in
  C^1(\R^2,\R^2)$ with $B=\curla$. Assume
  further that the conditions \eqref{c1}-\eqref{c3} are fulfilled for
  $\delta\in(0,\frac{1}{2})$ and $\eta\in (0,1)$. Then $\left[{\rm
      sgn}(\DA)P_0^\perp,V^{-1} \right]$ maps $\hilbert$ in
  $\mathcal{D}(V)$ and 
\begin{equation}\label{rrr}
\left \lVert V\left[{\rm sgn}(\DA)P_0^\perp,V^{-1} \right] \right
\rVert \leq  4 \delta^2.
\end{equation}
\end{lemma}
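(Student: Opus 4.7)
The strategy is to mirror the proof of Lemma~\ref{lemmakey}: I represent ${\rm sgn}(\DA) P_0^\perp$ through a resolvent integral and then apply Lemma~\ref{resolvent1} to convert the commutator with $V^{-1}$ into a quantity estimable via a Neumann-series expansion. Since ${\rm sgn}(0)=0$, we have ${\rm sgn}(\DA)P_0^\perp={\rm sgn}(\DA)$, so the task reduces to bounding $V[{\rm sgn}(\DA),V^{-1}]$. Exploiting the spectral gap $(-2\beta_0,0)\cup(0,2\beta_0)\subset\varrho(\DA)$ from Lemma~\ref{lemma1} together with the scalar identity $\pi^{-1}\int_0^\infty 2\lambda/(\lambda^2+t^2)\,dt={\rm sgn}(\lambda)$ for $\lambda\neq 0$, I would use the strong-sense representation
\begin{equation*}
{\rm sgn}(\DA)\,P_0^\perp=\frac{1}{\pi}\int_0^\infty \bigl(R_{\bf A}(it)+R_{\bf A}(-it)\bigr)\,dt.
\end{equation*}
The contributions on the kernel cancel ($R_{\bf A}(it)P_0+R_{\bf A}(-it)P_0=0$), so the integrand effectively acts on ${\rm ker}(\DA)^\perp$, where $\|R_{\bf A}(\pm it)P_0^\perp\|\leq 1/\sqrt{4\beta_0^2+t^2}$ uniformly in $t$.

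For $\varphi,\psi\in\core$, I would compute the matrix element $\sps{V\varphi}{[{\rm sgn}(\DA)P_0^\perp,V^{-1}]\psi}$ by passing $V$ and the commutator inside the integral and invoking Lemma~\ref{resolvent1} at each $z=\pm it$, obtaining
\begin{equation*}
V\bigl[R_{\bf A}(\pm it),V^{-1}\bigr]=-R_{\bf A}(\pm it)\,T\,R_{\bf A}(\pm it)\,\Theta_r(\pm it).
\end{equation*}
For $|t|>\delta$ the Neumann hypothesis $\|T R_{\bf A}(\pm it)\|<1$ holds because $\|R_{\bf A}(\pm it)\|\leq 1/|t|$ and $\|T\|\leq\delta$. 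For small $|t|$ the bound fails, but the kernel cancellation allows one to insert $P_0^\perp$ next to every resolvent; on ${\rm ker}(\DA)^\perp$ we then have $\|T R_{\bf A}(\pm it)P_0^\perp\|\leq\delta/(2\beta_0)=\delta^2\sqrt{1-\eta}\leq 1/2$, and a $P_0^\perp$-variant of Lemma~\ref{resolvent1} goes through with $\|\Theta_r(\pm it)\|\leq 2$. The integrand's operator norm is thereby bounded by $\|R_{\bf A}(\pm it)P_0^\perp\|^2\cdot\|T\|\cdot\|\Theta_r(\pm it)\|\leq 2\delta/(4\beta_0^2+t^2)$.

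Summing the two $\pm it$ contributions and using $\int_0^\infty dt/(4\beta_0^2+t^2)=\pi/(4\beta_0)$ yields
\begin{equation*}
\bigl\|V[{\rm sgn}(\DA)P_0^\perp,V^{-1}]\bigr\|\leq\frac{2}{\pi}\int_0^\infty\frac{2\delta\,dt}{4\beta_0^2+t^2}=\frac{\delta}{\beta_0}=2\delta^2\sqrt{1-\eta}\leq 4\delta^2,
\end{equation*}
giving the claimed bound \eqref{rrr}. That the commutator maps $\hilbert$ into $\mathcal{D}(V)$ then follows from the fact that this integral representation of $V\cdot[{\rm sgn}(\DA)P_0^\perp,V^{-1}]\psi$ converges in norm in $\mathcal{H}$ for every $\psi\in\mathcal{H}$. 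The principal technical obstacle is the small-$|t|$ region, where Lemma~\ref{resolvent1} cannot be applied to $R_{\bf A}(\pm it)$ directly since $\|R_{\bf A}(\pm it)\|\sim 1/|t|$ blows up; the remedy is the cancellation of the $P_0$ parts in $R_{\bf A}(it)+R_{\bf A}(-it)$, which permits working with $R_{\bf A}(\pm it)P_0^\perp$ throughout and exploiting the uniform bound $\|R_{\bf A}(\pm it)P_0^\perp\|\leq 1/(2\beta_0)$ coming from the spectral gap.
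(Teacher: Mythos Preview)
Your overall strategy---representing ${\rm sgn}(\DA)P_0^\perp$ by a resolvent integral and applying Lemma~\ref{resolvent1} pointwise---is reasonable, but there is a genuine gap in the treatment of the small-$|t|$ region. The ``$P_0^\perp$-variant of Lemma~\ref{resolvent1}'' you invoke does not follow by a routine modification. The proof of Lemma~\ref{resolvent1} rests on the identity $[R_{\bf A}(z),V^{-1}]=R_{\bf A}(z)[V^{-1},\DA]R_{\bf A}(z)$, which uses $(\DA-z)R_{\bf A}(z)=\mathbbm{1}$. For $S_z:=R_{\bf A}(z)P_0^\perp$ one only has $(\DA-z)S_z=P_0^\perp$, and since $V^{-1}$ does \emph{not} commute with $P_0^\perp$, the analogous computation acquires extra terms: one finds
\[
[S_z,V^{-1}]=-S_zV^{-1}TS_z - P_0V^{-1}S_z + S_zV^{-1}P_0,
\]
and after multiplying by $V$ the recursion for $K_z:=V[S_z,V^{-1}]$ becomes $K_z(P_0^\perp+TS_z)=-V[P_0,V^{-1}]S_z-S_zTS_z$. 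The operator $P_0^\perp+TS_z$ annihilates ${\rm Ran}(P_0)$ and is therefore not invertible, so you cannot simply solve for $K_z$; the piece $K_zP_0$ must be handled separately, and the term $V[P_0,V^{-1}]S_z$ requires Lemma~\ref{lemmakey}(a) as an additional input. None of this is insurmountable, but it is substantially more work than ``inserting $P_0^\perp$ next to every resolvent'', and the constant $4\delta^2$ would have to be re-derived.

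The paper sidesteps this obstacle by choosing a different contour. Using the spectral gap it writes ${\rm sgn}(\DA)P_0^\perp=P_{(\beta_0,\infty)}-P_{(-\infty,-\beta_0)}=\tfrac{1}{2}\bigl(U(\beta_0)+U(-\beta_0)\bigr)$ via Kato's formula $U(\lambda)=\pi^{-1}\int_{-\infty}^\infty R_{\bf A}(\lambda+\ri t)\,\rd t$. The vertical lines ${\rm Re}\,z=\pm\beta_0$ stay at distance $\sqrt{\beta_0^2+t^2}$ from $\sigma(\DA)$ (in particular from the eigenvalue $0$), so $\|TR_{\bf A}(\pm\beta_0+\ri t)\|\le\delta/\beta_0\le 1/2$ holds for \emph{every} $t\in\R$, and Lemma~\ref{resolvent1} applies directly along the whole contour without any kernel-cancellation argument. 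A single integration $\int_\R(\beta_0^2+t^2)^{-1}\rd t=\pi/\beta_0$ then gives the bound~\eqref{rrr}.
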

\begin{proof}
Let  $P_I $ denote the spectral projection of $\DA$ on 
the interval $I \subset \R $. Then, we 
have
\begin{equation*}
 {\rm sgn}(\DA) P_0^\perp  = \ P_{(\beta_0,\infty)} \ - \ P_{(-\infty,-\beta_0)}
\end{equation*}
with $\beta_0$ as in \eqref{eq:14}.
Using  \cite[Lemma VI-5.6]{Kato1980}  we find the representations
\begin{equation*}
 P_{(-\infty, -\beta_0)} =  \tfrac{1}{2} \big( \mathbbm{1}  -  U(-\beta_0) \big), \
\end{equation*}
\begin{equation*}
 P_{(\beta_0, \infty)} =  \tfrac{1}{2} \big( \mathbbm{1}  +
 U(\beta_0) \big),   \end{equation*}
where, for $\lambda\in\varrho(\DA)$,  
\begin{align*}
  U(\lambda)={\rm s}-\lim_{R\to \infty} \int_{-R}^{R} R_{\mathbf A} (\lambda+\ri
  t) \frac{\rd t}{\pi}=:\int_{-\infty}^\infty R_{\mathbf A} (\lambda+\ri
  t) \frac{\rd t}{\pi}.
\end{align*}
This yields  the commutator identity
\begin{equation*}
  \left[ {\rm sgn}(\DA)P_0^\perp,V^{-1} \right]   = \ \tfrac{1}{2} \left[ U(-\beta_0), V^{-1} \right]
  + \tfrac{1}{2} \left[ U(\beta_0), V^{-1} \right] . 
\end{equation*}
Pick $\varphi,\psi \in \core$ with $\|\varphi\|=\|\psi\|=1$. 
 Applying Lemma \ref{resolvent1} we get
\begin{align*}
  |\sps{V\varphi}{\left[U(\beta_0),V^{-1} \right]\psi}| &=
\frac{1}{\pi} \left| \int_{-\infty}^{\infty}
 \sps{V\varphi}{\left[R_{\bf A}(\beta_0 + \ri t), V^{-1} \right] \psi } \ \mbox{d}t \right|  \\ & \leq
\frac{1}{\pi}  \int_{-\infty}^{\infty}
 \left \lVert \varphi \right \rVert  \left \lVert V\left[R_{\bf A}(\beta_0 + \ri t), V^{-1} \right] \psi \right \rVert  
\mbox{d}t
\\ & = 
\frac{1}{\pi}  \int_{-\infty}^{\infty} \left \lVert R_{\bf A}(\beta_0 + \ri t) T R_{\bf A}(\beta_0 + \ri t)
  \Theta_r(\beta_0 + \ri t) \psi \right \rVert  \mbox{d}t
\\ & \leq
\frac{\delta}{\pi}  \int_{-\infty}^{\infty} \left \lVert R_{\bf A}(\beta_0 + \ri t) \right \rVert^2 
\left \lVert \Theta_r(\beta_0 + \ri t) \right \rVert
  \mbox{d}t
\\ & \leq
\frac{2\delta}{\pi}  \int_{-\infty}^{\infty} \frac{1}{\beta_0^2 + t^2} \ \mbox{d}t  =\frac{2\delta}{\beta_0} \le  4\delta^2.
\end{align*}
In the estimate above we use that, for $t \in \R$,
\begin{equation*}
 \left \lVert \Theta_r(\beta_0 + \ri t) \right \rVert \leq 
\big( 1 - \left \lVert R_{\bf A}(\beta_0 + \ri t) \right \rVert  \lVert T \rVert  \big)^{-1} \leq
 \big( 1 - \delta \left \lVert R_{\bf A}(\beta_0 ) \right \rVert \big)^{-1} \leq 2.
\end{equation*}
Similarly, the same inequality holds for $\left[U(-\beta_0),V^{-1} \right]$.
Hence, we find that 
\begin{equation*}
 |\sps{V\varphi}{\left[ P_0^\perp {\rm sgn}(\DA) P_0^\perp,V^{-1} \right]\psi}| \leq 4 
\delta^2, 
\end{equation*}
From this follows the
proof of the lemma.
\end{proof}
\section{Proof of Theorem \ref{thm1}}\label{prooft1}
We note that the assumptions in Theorem 1 imply either that $V(\bx)\to
\infty$ as $|\bx|\to \infty $ or $V(\bx)\to - \infty$ as $|\bx|\to
\infty $ by using the continuity of $V$.  We may assume without lost
of generality
that $V$ is positive at infinity.  Similarly, it suffices to consider
the case $B(\bx)\to \infty$ as $|\bx|\to \infty$ since otherwise we
just have to change
the roles of $d$ and $d^*$ in the proof. 

In order to prove Theorem \ref{thm1} it suffices to find a constant $c>0$ such that 
\begin{align}
  \label{eq:2}
  \|(\DA+V) \varphi \|\ge c\|V\varphi\|,\quad \varphi\in \core,
\end{align}
holds; see, e.g. \cite{Suzuki2000}. Moreover, according to Lemma \ref{schneider} we
may assume  that  $V$ and $B$ fulfill the conditions
\eqref{c1}-\eqref{c3}, where  $\eta\in(0,1)$ is some fix constant and 
 $\delta\in(0,1)$ can be chosen arbitrarily small.

 Since $\sigma_{\rm ess}(\DA)=\{0\}$ it is convenient to show
 \eqref{eq:2} by splitting $\varphi$ as the sum of $P_0\varphi$ and
 $P_0^\perp\varphi$. Using the bounds derived in Section \ref{usefull}
 we can then estimate the cross terms.
\begin{proof}[Proof of Theorem \ref{thm1}]
Let $\varphi\in\core$. We compute, using Lemma \ref{lemmakey},
\begin{equation*}
\begin{split}
  \left \lVert (\DA+V)\varphi \right \rVert ^2 &= \left \lVert (\DA+V)(P_0+P_0^\perp)\varphi \right \rVert^2\\
&= \left \lVert \big(VP_0+(\DA+V)P_0^\perp \big)\varphi \right \rVert^2\\
&= \left \lVert(\DA+V)P_0^\perp\varphi \right \rVert^2+2{\rm Re}\sps{(\DA+V)P_0^\perp\varphi}{VP_0\varphi}+
\left \lVert VP_0\varphi \right \rVert^2\\
&=\|(\DA+V)P_0^\perp\varphi\|^2-\delta \left \lVert VP_0^\perp\varphi \right \rVert^2\\
&\quad+
2{\rm Re}\sps{VP_0\varphi}{\DA
  P_0^\perp\varphi}+\|V\varphi\|^2-(1-\delta) \left \lVert VP_0^\perp\varphi \right \rVert^2.
\end{split}
\end{equation*}
We estimate each of the terms above separately. Observe that for
any $\varepsilon\in(0,1)$ we have that  
\begin{equation}
  \label{eq:16}
\begin{split}
  &\|(\DA+V)P_0^\perp\varphi\|^2-\delta \left \lVert VP_0^\perp\varphi \right \rVert^2\\
  &\qquad\qquad\ge (1-\varepsilon)\|\DA
  P_0^\perp\varphi\|^2+(1-\varepsilon^{-1}-\delta)
\|V P_0^\perp\varphi\|^2.
\end{split}
\end{equation}
An application of Lemma \ref{lemmakey}  yields
\begin{equation}
  \label{eq:17}
  \begin{split}
|\sps{VP_0\varphi}{\DA
  P_0^\perp\varphi}|&=|\sps{VP_0V^{-1} V\varphi}{\DA
  P_0^\perp\varphi}|\\
&= |\sps{P_0 V\varphi}{\DA
  P_0^\perp\varphi}+\sps{V \left[P_0,V^{-1} \right] V\varphi}{\DA P_0^\perp\varphi}|\\
&=|\sps{V\left[ P_0,V^{-1} \right] V\varphi}{\DA P_0^\perp\varphi}|\\
&\le 4\delta\, \|V\varphi\|\,\|\varphi\|  \\
&\le 4\delta^2\, \|V\varphi\|^2,
  \end{split}
\end{equation}
where in the last equality we use \eqref{c1}.
Further, by Lemma \ref{lemmakey} (a), we obtain
\begin{align*}
  \left \lVert VP_0^\perp\varphi \right \rVert \le  \left \lVert V\varphi \right \rVert
  + \left \lVert V \left[P_0^\perp,V^{-1} \right] V\varphi \right \rVert 
 \le (1+ 4\delta^2)\|V\varphi\|.
\end{align*}
Thus
\begin{align}
  \label{eq:19}
  \|V\varphi\|^2-(1-\delta)\left \lVert VP_0^\perp\varphi \right \rVert^2 \ge (\delta - 4\delta^2) \|V\varphi\|^2.
\end{align}
Therefore, in view of \eqref{eq:16},\eqref{eq:17}, and \eqref{eq:19}
it suffices to show  that 
\begin{align}
  \label{eq:20}
  \left\| \DA P_0^\perp\varphi \right\|^2
    +\tfrac{(1-\varepsilon^{-1}-\delta)}{(1-\epsilon)}
    \left\| VP_0^\perp\varphi \right\|^2  \ge 0,
\end{align} 
for $\delta>0$ small enough and some $\varepsilon \in (0,1)$. We set
\begin{align*}
  c_{\epsilon,\delta}:=-\tfrac{(1-\varepsilon^{-1}-\delta)}{(1-\epsilon)}>0.
\end{align*}
In view of Lemma \ref{lemma1} we have that 
\begin{align}
  \label{eq:21}
  P_0^\perp=\left(
\begin{array}{cc}
\lp&0\\
0&\mathbbm{1}
\end{array}\right),
\end{align}
where $\lp$ denotes the orthogonal  projection onto ${\rm
  ker}(d)^{\perp}$. Using this, \eqref{eq:5}, and writing
$\varphi=(\varphi_1,\varphi_2)^{\rm T}$ we get 
\begin{align*}
  \left \lVert \DA P_0^\perp\varphi \right \rVert^2- c_{\epsilon,\delta}\left \lVert V P_0^\perp\varphi \right \rVert^2=
\|d\lp\varphi_1\|^2- c_{\epsilon,\delta}\|V\lp\varphi_1\|^2+\|d^*\varphi_2\|^2- c_{\epsilon,\delta}\|V\varphi_2\|^2.
\end{align*}
According to condition \eqref{c3} and
\eqref{eq:1} we have that
\begin{equation}\label{ddstar}
\begin{split}
  \|d^*\varphi_2\|^2-
  c_{\epsilon,\delta}\|V\varphi_2\|^2&=\sps{\varphi_2}{dd^*\varphi_2} -
  c_{\epsilon,\delta}\sps{\varphi_2}{V^2\varphi_2}\\
&\ge [1-c_{\varepsilon,\delta}(1-\eta)] \sps{\varphi_2}{2B\varphi_2}.
\end{split}
\end{equation}
In order to give a lower bound to $ \|d\lp\varphi_1\|^2-
c_{\epsilon,\delta}\|V\lp\varphi_1\|^2$ we will use that 
$$dd^*=sd^* d
s^*\quad \mbox{on} \quad {\rm Ran}(\lp)\cap \mathcal{D}(dd^*),$$ where $s,s^*$ are the isometries
given in  \eqref{eq:8}. A simple computation yields 
\begin{equation}\label{dddd}
\begin{split}
  \|d\lp\varphi_1\|^2- c_{\epsilon,\delta}\|V\lp\varphi_1\|^2&=
  \|ds^*s\lp\varphi_1\|^2- c_{\epsilon,\delta}\|Vs^*s\lp\varphi_1\|^2\\
&= \|d^*s\lp\varphi_1\|^2- c_{\epsilon,\delta}\|Vs^*V^{-1}
Vs\lp\varphi_1\|^2\\
&=\|d^*s\lp\varphi_1\|^2- c_{\epsilon,\delta}\big\|\big(s^*+V\left[s^*,V^{-1}\right]\big)
Vs\lp\varphi_1\big\|^2.
\end{split}
\end{equation}
We note that 
$V\left[s^*,V^{-1}\right]$
is one of the components of the operator
\begin{align*}
  V \left[{\rm sgn}(\DA)P_0^\perp,V^{-1}\right]= \left(\begin{array}{cc}
0&V\left[s^*,V^{-1}\right]\\
V\left[s\lp, V^{-1}\right]&0
  \end{array}
\right).
\end{align*}
Using the definition of the operator norm, we obtain that
\begin{align*}
\left \lVert V \left[s^*,V^{-1}\right] \right \rVert
&= \sup_{\varphi=(0,\varphi_2)^{\rm T}, \|\varphi_2\|=1 } 
 \left \lVert V \left[{\rm sgn}(\DA)P_0^\perp,V^{-1}\right] \varphi \right \rVert
\\
&\le
    \left \lVert V \left[{\rm sgn}(\DA)P_0^\perp,V^{-1}\right] \right
    \rVert\le 4\delta^2,
\end{align*}
where in the last bound we use Lemma \ref{lemmakey2}. 
Combining this with
\eqref{dddd} and proceeding as in \eqref{ddstar} we obtain 
\begin{align}\label{sstar}
  \|d\lp\varphi_1\|^2- c_{\epsilon,\delta}\|V\lp\varphi_1\|^2&
\ge \left[1- c_{\epsilon,\delta}(1-\eta)(1+ 12\delta^2)\right]
\sps{s\lp\varphi_1}{2B s\lp\varphi_1}.
\end{align}
Choosing $\varepsilon=1-\delta^{1/2}$ we get that
$c_{\epsilon,\delta}=1+\mathcal{O}(\delta^{1/2})$ as $\delta\to
0$. This implies that, for $\delta>0$ sufficiently small, the terms
in \eqref{ddstar} and \eqref{sstar} are positive. This concludes the
proof of the theorem.
\end{proof} 
\section{Proof of Theorem \ref{thm2}}\label{prooft2}
In this section we prove Theorem \ref{thm2} for $B>0$. The case $B<0$
can be done similarly. 
 As we mention in Section \ref{susy} there is a function 
$\phi\in C^2(\R^2, \R)$
satisfying $\Delta \phi=B$. We choose
${\mathbf A}(\bx)=(-\partial_2\phi, \partial_1 \phi)$ as the vector
potential in the Hamiltonian $\DA$. A key element in our proof is
that the space 
\begin{equation}\label{paula}
X :=\{ \omega e^{- \phi} \in L^2(\R^2,\C; B dx)\,|
\,\omega\,\, \mbox{is entire in}\,\, x_1+\ri x_2\}\, , 
\end{equation}
is infinite dimensional. Since $B>B_0$ we see that $X$ is a subspace of ${\ker(d)}$  (see \eqref{eq:7}).
Let us first state a technical result
concerning this space whose proof is given at the end of this section.
\begin{lemma}\label{domainquestions}
Assume that the conditions of Theorem \ref{thm2} are fulfilled. Let
$\phi \in C^2(\R^2, \R) $ be such that $B = \Delta \phi  $.
Then we have,  for   $\Omega \in X$,     
\begin{itemize}
\item[a)] $\Omega \in \mathcal{D}(d^*)\cap \mathcal{D}(V)$ and $ \left\lVert  d^* \Omega
  \right \rVert  = 
  \|\sqrt{2B}\Omega \|$
\item[b)] $(d^*\Omega, - V\Omega)^T  \in \mathcal{D}(H)$ 
\end{itemize}
\end{lemma}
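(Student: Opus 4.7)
The plan is a cut-off/mollification approximation exploiting two structural facts: in the gauge ${\bf A} = (-\partial_2\phi, \partial_1\phi)$ fixed above, $\Omega = \omega e^{-\phi}$ satisfies $d\Omega = 0$ pointwise (a direct computation reduces this to $\partial_{\bar z}\omega = 0$, which holds since $\omega$ is entire), and the canonical commutation relation $[d,d^*] = 2B$ holds on smooth compactly supported functions.

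For part (a), the inclusions $\Omega \in L^2$ and $\Omega \in \mathcal{D}(V)$ follow from $|B| \ge B_0 > 0$ and the pointwise bound $V^2 \le CB$, valid outside a compact set by \eqref{wi1} and \eqref{wi3}. To show $\Omega \in \mathcal{D}(d^*)$ and to compute its norm, I would pick smooth cut-offs $\chi_n$ with $\chi_n \to 1$ and $\|\nabla\chi_n\|_\infty \to 0$, and set $\Omega_n := \chi_n\Omega$, which lies in $\mathcal{D}(d^*)$ after a standard mollification. Since $d\Omega = 0$ pointwise one has $d\Omega_n = [d,\chi_n]\Omega$ with $\|d\Omega_n\| \le \|\nabla\chi_n\|_\infty \|\Omega\|$; the CCR then yields
\begin{equation*}
\|d^*\Omega_n - d^*\Omega_m\|^2 = \|d(\Omega_n - \Omega_m)\|^2 + 2\sps{\Omega_n - \Omega_m}{B(\Omega_n - \Omega_m)} \longrightarrow 0,
\end{equation*}
the second term by dominated convergence using $\Omega \in L^2(B\,dx)$. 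Closedness of $d^*$ gives $\Omega \in \mathcal{D}(d^*)$, and the limit of $\|d^*\Omega_n\|^2 = \|d\Omega_n\|^2 + 2\sps{\Omega_n}{B\Omega_n}$ delivers $\|d^*\Omega\|^2 = 2\|\sqrt{B}\Omega\|^2$.

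For part (b), let $\psi := (d^*\Omega, -V\Omega)^T$. A direct computation using the CCR and $d\Omega = 0$ gives
\begin{equation*}
H\psi = \big((i\partial_1 V + \partial_2 V)\Omega,\ (2B - V^2)\Omega\big)^T,
\end{equation*}
which lies in $L^2$: by \eqref{wi2} and \eqref{wi3}, both $|\nabla V|$ and $|2B - V^2|$ are bounded by constant multiples of $|V|$, hence by $C\sqrt{B}$ outside a compact set, and $\Omega \in L^2(B\,dx)$. To upgrade this to $\psi \in \mathcal{D}(H)$, apply the same cut-off procedure: $\psi_n := \chi_n\psi$ (mollified into $\core$) satisfies $\psi_n \to \psi$ in $L^2$ and $H\psi_n = \chi_n H\psi - i({\boldsymbol\sigma}\cdot\nabla\chi_n)\psi \to H\psi$ in $L^2$; essential self-adjointness of $H$ on $\core$ then finishes the argument.

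The main technical hurdle is the Cauchy estimate in part (a): it must be closed using only the weighted integrability $\Omega \in L^2(B\,dx)$, and the combination of pointwise $d\Omega = 0$ with the CCR $[d,d^*] = 2B$ is precisely what allows the two pieces of the split identity to be controlled simultaneously.
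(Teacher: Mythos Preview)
Your proposal is correct and follows essentially the same route as the paper. Part (a) is identical in spirit: the paper also uses scaled cut-offs $\chi(\cdot/n)$, the pointwise identity $d\Omega=0$, and the CCR to show $(d^*\Omega_n)$ is Cauchy, then invokes closedness of $d^*$. For part (b) the paper takes the slightly shorter route of verifying $\psi\in\mathcal{D}(H^*)=\mathcal{D}(H)$ by a single integration by parts against test functions, rather than your approximation-plus-closedness argument, but the two are standard equivalent manoeuvres and your formal computation of $H\psi$ matches exactly the expression the paper uses in the proof of Theorem~\ref{thm2}.
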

 \begin{proof}[Proof of Theorem \ref{thm2}]
We first show that the dimension of $X$ (defined in \eqref{paula}) is
infinite.  To this end define
$\widetilde{\phi}=\phi-\frac{1}{2}\ln(B)$ and note that 
$$\Delta \widetilde{\phi}=B+\frac{(\nabla B)^2}{2B^2}-\frac{\Delta B}{2B}>B
-\frac{\Delta B}{2B}.$$
Thus, by the discussion in Section \ref{susy} (see \eqref{eq:6}) the space
 \begin{equation*}
Y:=\{ \omega e^{- \widetilde{\phi}} \in L^2(\R^2,\C)\,|
\,\omega\,\, \mbox{is entire in}\,\, x_1+\ri x_2\}\, , 
\end{equation*}
is infinite dimensional. The claim now follows since clearly $X$ and
$Y$ are isomorphic.
Let us define  the subspace
$$ W := \Bigg\{ \left( \begin{array}{c}\hspace{0.17cm} d^* \Omega \\   -V \Omega
                                   \end{array}
                                  \right)\in \hilbert\,|\, \Omega\in X  \Bigg\} \subset   \mathcal{D}(H). $$
By Lemma \ref{domainquestions} we have, for any  $\Omega \in X$, 
$$\left\lVert \left( \begin{array}{c}\hspace{0.17cm} d^* \Omega \\   -V \Omega
                              \end{array}  \right) \right\rVert  =  
         \left\lVert  d^* \Omega \right \rVert + \left\lVert  V \Omega
         \right \rVert 
 \ge  \sqrt{2B_0} \|\Omega \|.$$
Therefore, the map $X\ni \Omega\mapsto (d^*\Omega,-V\Omega)^{\rm T}\in
W$ is bijective and we conclude that $\dim W=\infty.$

Now pick $\psi = (d^*\Omega, - V\Omega)^T \in W$, then
\begin{align*}
\| H\psi \| &=  \left\lVert \left( \begin{array}{cc}  V&d^*\\  d&V  \end{array} \right)
                                         \left( \begin{array}{c}\hspace{0.17cm} d^* \Omega \\   -V \Omega \end{array}  
                         \right) \right\rVert   
                    = \left\lVert
\left( \begin{array}{c} 
(\ri\partial_1V+\partial_2V) \Omega \\   
(2B-V^2) \Omega\end{array}  
                         \right) \right\rVert \\ 
 & \le (\|\nabla V/ V\|_\infty+\|(2B-V^2)/V\|_\infty) \|\psi\| .
   \end{align*}
   Since $\dim W = \infty$ the inequality above implies that
   $\sigma_{ess} (H) \neq \emptyset$.
\end{proof}
\begin{proof}[Proof of Lemma \ref{domainquestions}]
Note first that from Remark \ref{velocity1} follows that  $X\subset\mathcal{D}(V)$. 
Let $\Omega\in X$ and $\chi \in C_0^\infty(\R^2,[0,1])$ be such that 
$$\chi(x) = 
\begin{cases}
  1,  & \text{for } |x| \leq 1\\
  0, & \text{for } |x| \geq 2\,.
\end{cases}$$ 
We define $\Omega_n := \chi(\frac{\cdot}{n})\Omega, n \in \N
$. Clearly, $\Omega_n \in C_0^2(\R^2,\C)$ holds and moreover $\Omega_n \to \Omega$ as $n \to \infty$ in $L^2(\R^2,\C
; B\mbox{d}x)$. Since $\Omega \in \mbox{ker}(d)$ we have,  using the
commutator identity \eqref{eq:1}, that
$$\|d^*\Omega_n \|^2  \ = \ \|d\Omega_n \|^2   + \  2\|\sqrt{B}\Omega_n \|^2  
\ = \ \frac{1}{n^2} \left \lVert (i\partial_1 \chi
  - \partial_2\chi)\left(\frac{\cdot}{n} \right) \Omega\right \rVert^2
+ \ 2\|\sqrt{B}\Omega_n \|^2 $$ holds,  for any $n \geq 1$.  Thus, 
$\big(d^*\Omega_n\big)_{n \in \N}$ is a Cauchy sequence in
$L^2(\R^2,\C )$. By
the closedness of $d^*$ we conclude that $\Omega \in \mathcal{D}(d^*)$
with $ \left\lVert  d^* \Omega \right \rVert  =  \|\sqrt{2B}\Omega \| $. 

For $\Omega \in X$ define $\psi := (d^*\Omega, - V\Omega)^T$.  A
direct computation, using integration by parts, shows that one finds  $\phi\in \hilbert$ such that for all
$\varphi\in \core$ holds 
\begin{align*}
\sps{H\varphi}{\psi}=\sps{\varphi}{\phi}.
\end{align*}
This implies the claim by the definition of the adjoint operator.
\end{proof}

\section{Proof of Theorems \ref{thm3a} and  \ref{thm3}}\label{prooft3}
We roughly explain the main ingredients of the proof of Theorems \ref{thm3a}
and \ref{thm3}: Note first that it suffices to construct a sequence
$(\psi_n)_{n\in\N} \subset \mathcal{D}(H)$ which converges weakly to
zero such that $\|H\psi_n\|/\|\psi_n\|\to 0$ as $n\to\infty$ (Weyl
sequence).  For simplicity we consider  the case when   $B(\bx)=B_0$. Recall that
the corresponding magnetic Dirac operator $\DA$ has the (Landau)
eigenvalues $l_n:=\mbox{sgn} (n) \sqrt{2|n|B_0}$, $n \in \Z$. Now
assume that there is a sequence $(\bx_n)_{n\in \N}$ such that
$V(\bx_n)=\sqrt{2nB_0}=l_n $ for all $n\in \N$ (this is a
simplification of condition \eqref{con1}). The bulk of the  Weyl sequence consists
of eigenfunction $(\varphi_n)_{n\in\N}$ of $\DA$ centered around $\bx_n$
and with eigenenergies $-l_n$. It is well known that these
eigenfunctions are almost Gaussian-like localized. Due to this strong
localization one has that $(V\varphi_n)(\bx)\approx V(\bx_n)\varphi_n(\bx) =l_n
\varphi_n(\bx)$ in the sense of $L^2$. Therefore, we get that
$$\|(\DA + V)\varphi_n\| \approx 0,$$
where the error terms will be controlled using the remaining
assumptions of the theorems.

This consideration can be also  extended to non-constant magnetic
fields. In this case we construct  the Weyl sequence using
eigenfunctions of the magnetic Dirac operator with constant magnetic
field $B(\bx_n)$. Such a local linearization can be well
controlled using again that the Landau  eigenfunction are strongly localized.


  Throughout this section we will assume without lost of generality  that $B(\bx_n)\ge B_0$
is positive. In order to prepare the proof of Theorems \ref{thm3a}
and \ref{thm3} we introduce some notation:  For a sequence 
$(\bx_n)_{n\in\N}\subset\R^2$ we set $V_n:=V(\bx_n)$ and $B_n:=B(x_n)$ and define the
magnetic vector potentials 
\begin{align*}
  &{\mathbf A}_n(\bx):=\int_0^1 B_n\wedge (\bx-\bx_n) s \rd
  s=\tfrac{1}{2} B_n\wedge (\bx-\bx_n) ,\\
  &\widetilde{{\mathbf A}}_n(\bx):= \int_0^1
  B(\bx_n+s(\bx-\bx_n))\wedge (\bx-\bx_n) s \rd s,
\end{align*}
where $a\wedge {\mathbf v}:=a(-v_2,v_1)$ , for $a\in \R$ and ${\mathbf v}=(v_1,v_2)\in \R^2$. Clearly,
\begin{align*}
  {\rm curl}\,{\mathbf A}_n=B_n\quad\mbox{and}\quad  {\rm
    curl}\,\widetilde{\mathbf A}_n=B\quad\mbox{on}\,\,\R^2.
\end{align*}
We define the operators $d_n$ and $d_n^*$ through the relation
\begin{align*}
 D_{{\mathbf A}_n}= \left(\begin{array}{cc}
0 & d_n^*\\
d_n & 0
    \end{array}
  \right).
\end{align*}
Let  $(\pn)_{n\in\N}$ be a
sequence of natural numbers. As already mentioned above an important
ingredient  in our proof is
that $2\pn B_n$ is the square of the $\pn$-th Landau level of the
Dirac operator $D_{{\mathbf A}_n}$.  
For all $n\in\N$, we define the functions 
\begin{align}
  \label{eq:15}
  \varphi_n(\bx):=\left(\begin{array}{c}
(\ri B_n (x_1-\ri x_2) )^{\pn} e^{-B_n|\bx|^2/4}\\
-V_n(\ri B_n (x_1-\ri x_2) )^{\pn-1} e^{-B_n|\bx|^2/4}
 \end{array}
  \right)
\end{align}
and observe that
\begin{align}
  \label{eq:24}
  \hat{\varphi}_n(\bx):= \varphi_n(\bx-\bx_n)=\left(\begin{array}{c}
(d_n^* )^\pn e^{-B_n|\bx-\bx_n|^2/4}\\
-V_n(d_n^*)^{\pn-1} e^{-B_n|\bx-\bx_n|^2/4}
 \end{array}
  \right).
\end{align}
We have, for any $k\in\N$ (see e.g., \cite[Section 7.1.3]{Thaller}), 
\begin{align}
  \label{eq:25}
  d_n d_n^*  [(d_n^* )^{k-1} e^{-B_n|\bx-\bx_n|^2/4}]=2kB_n [(d_n^* )^{k-1} e^{-B_n|\bx-\bx_n|^2/4}].
\end{align}
Next we define the localization functions.  Let $\chi \in
C^\infty_0(\R^2,[0,1])$ be such that $\chi(\bx)=1$ for $|\bx|\le1$
and $\chi(\bx)=0$ for $|\bx|\ge 2$. We set
\begin{align*}
  \chi_n(\bx):=\chi\left(\frac{\bx-\bx_n}{r_n}\right),
  \end{align*} 
  where  $r_n>0$ will be chosen in the proofs later on.

Finally, we observe that
  since ${\rm curl}\,( {\mathbf A}-\widetilde{{\mathbf A}}_n)=0$ there
  exists a function $g_n\in C^2(\R^2,\R)$ such that $\nabla
  g_n={\mathbf A}-\widetilde{{\mathbf A}}_n$ on $\R^2$.

We define the Weyl functions to be given, for all $n\in\N$, by
\begin{align}
  \label{eq:26}
  \psi_n(\bx):= e^{\ri g_n (\bx)}\chi_n(\bx) \hat{\varphi}_n(\bx),
  \quad \bx\in\R^2.
\end{align}
Clearly, we have 
\begin{equation}\label{weyl}
\begin{split}
 e^{-\ri g_n} (\DA+V)\psi_n&=(D_{\widetilde{\mathbf
     A}_n}+V) \chi_n \hat{\varphi}_n\\
&=\chi_n(D_{{\mathbf
     A}_n}+V_n)\hat{\varphi}_n-\ri ({\boldsymbol \sigma}\cdot\nabla
\chi_n)\hat{\varphi}_n\\
&\quad+{\boldsymbol \sigma}\cdot({\mathbf
  A}_n-\widetilde{\mathbf A}_n) \chi_n \hat{\varphi}_n+
(V-V_n)\chi_n\hat{\varphi}_n.
\end{split}
\end{equation}
In order to prove  Theorems \ref{thm3a} and \ref{thm3} we will choose
$r_n>0$ such that $\psi_n$ are linear independent and 
each of the four terms above tend to zero uniformly in $\|\psi_n\|$ as
$n\to\infty$. In the next lemma we estimate the norms
$\|\psi_n\|$. 
\begin{lemma}\label{landau}
Assume that $V_n^2/(2\pn B_n)\to 1$ as $n\to\infty$.
Then, for all $n\in \N$ large enough, we have
\begin{align}\label{eq:29a}
 &  \|\varphi_n\|^2\ge 2^{\pn+1} \pi B_n^{\pn-1}\pn! \\
\label{eq:29}
&\|\psi_n\|^2\ge \frac{1}{4} \|\varphi_n\|^2\left(1-\frac{1}{\pn!}\int_{B_n
    r_n^2/2}^\infty s^\pn e^{-s} \rd s\right).
\end{align}
\end{lemma}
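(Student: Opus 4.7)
The plan is to evaluate $\|\varphi_n\|^2$ and the truncated integral $\int_{|\bx-\bx_n|\le r_n}|\hat{\varphi}_n(\bx)|^2\rd\bx$ in closed form via polar coordinates and the Gamma-function identity $\int_0^\infty s^k\re^{-s}\rd s = k!$, and then to deduce both bounds from the single asymptotic $|V_n|^2/(2\pn B_n)\to 1$ guaranteed by the hypothesis.

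Since $|\ri B_n(x_1-\ri x_2)|^{2k}= B_n^{2k}|\bx|^{2k}$, we have
\begin{equation*}
|\varphi_n(\bx)|^2 = \bigl(B_n^{2\pn}|\bx|^{2\pn}+|V_n|^2 B_n^{2\pn-2}|\bx|^{2\pn-2}\bigr)\re^{-B_n|\bx|^2/2}.
\end{equation*}
Passing to polar coordinates in $\bx$ and substituting $s = B_nr^2/2$ in both summands, I would arrive at
\begin{equation*}
\|\varphi_n\|^2 = 2\pi\cdot 2^{\pn}B_n^{\pn-1}\pn!\Big[1+\tfrac{|V_n|^2}{2\pn B_n}\Big],
\end{equation*}
from which \eqref{eq:29a} is immediate for $n$ large, since the bracket then exceeds $1$.

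For \eqref{eq:29}, I would use $|\chi_n(\bx)|\ge \mathbbm{1}_{\{|\bx-\bx_n|\le r_n\}}$ (because $\chi\equiv 1$ on the unit ball) together with translation invariance to reduce to
\begin{equation*}
\|\psi_n\|^2\ge\int_{|\bx|\le r_n}|\varphi_n(\bx)|^2\rd\bx.
\end{equation*}
Repeating the polar/substitution computation with the $s$-range truncated to $[0,B_nr_n^2/2]$, and writing $I_k:=\int_0^{B_nr_n^2/2}s^k\re^{-s}\rd s$, produces two nonnegative terms; dropping the $|V_n|^2$-term leaves the lower bound $\|\psi_n\|^2\ge 2\pi\cdot 2^{\pn}B_n^{\pn-1}I_{\pn}$. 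Since $I_{\pn}/\pn! = 1-\tfrac{1}{\pn!}\int_{B_nr_n^2/2}^{\infty}s^{\pn}\re^{-s}\rd s$, the desired inequality \eqref{eq:29} is equivalent to $1\ge \tfrac14\bigl[1+|V_n|^2/(2\pn B_n)\bigr]$, i.e.\ $|V_n|^2/(2\pn B_n)\le 3$, which holds for $n$ large enough because this ratio tends to $1$.

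The argument is thus computational rather than conceptual: the main obstacle is merely bookkeeping the powers of $2$, $B_n$ and $\pi$ through the substitution $s = B_n r^2/2$, and identifying the common factor $[1+|V_n|^2/(2\pn B_n)]$ in both the full and the truncated integrals so that \eqref{eq:29} collapses to the elementary numerical estimate above.
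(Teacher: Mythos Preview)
Your proposal is correct and follows essentially the same approach as the paper: both compute $\|\varphi_n\|^2$ exactly via polar coordinates and the substitution $s=B_nr^2/2$ to obtain $\|\varphi_n\|^2 = 2^{\pn+1}\pi B_n^{\pn-1}\pn!\bigl[1+|V_n|^2/(2\pn B_n)\bigr]$, then bound $\|\psi_n\|^2$ below by dropping the second component and truncating the radial integral, reducing \eqref{eq:29} to the numerical condition $|V_n|^2/(2\pn B_n)\le 3$. One small remark: the bracket $1+|V_n|^2/(2\pn B_n)$ is always $\ge 1$, so \eqref{eq:29a} in fact holds for every $n$, not just large $n$; the largeness is needed only for \eqref{eq:29}.
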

\begin{proof}
For $\psi_n=(\psi_{n,1},\psi_{n,2})^{\rm T}$ we compute for $n\in\N$
so large that $V_n^2/(2\pn B_n)\le 3$
\begin{align*}
  \|\psi_n\|^2&\le \|\varphi_n\|^2= \|\varphi_{n,1}\|^2+\|\varphi_{n,2}\|^2\\
&= 2B_n^{2\pn}\pi \int_0^\infty s^{2\pn} e^{-B_ns^2/2} s \rd s +
2V_n^2 B_n^{2(\pn-1)}\pi \int_0^\infty s^{2(\pn-1)} e^{-B_ns^2/2} s
\rd s
\\
&= 2^{\pn+1} \pi B_n^{\pn-1} \left(\int_0^\infty s^\pn e^{-s} \rd s+
\tfrac{V_n^2}{2B_n} \int_0^\infty s^{\pn-1} e^{-s} \rd s\right)\\
&= 2^{\pn+1} \pi B_n^{\pn-1} \left(1+\tfrac{V_n^2}{2B_n\pn}
\right) \pn!. 
\end{align*}
Noting that 
\begin{align*}
   2^{\pn+1} \pi B_n^{\pn-1}\pn! \le 2^{\pn+1} \pi B_n^{\pn-1} \left(1+\tfrac{V_n^2}{2B_n\pn}
\right) \pn!\le  2^{\pn+3} \pi B_n^{\pn-1}\pn!,
\end{align*}
we get  an upper and lower bound for $\|\varphi_n\|^2$. In particular,
we obtain \eqref{eq:29a}. Moreover, 
\begin{align*}
  \|\psi_n\|^2\ge\|\psi_{n,1}\|^2 &\ge 2B_n^{2\pn} \pi \int_0^{r_n}
  (s^2)^\pn   e^{-B_ns^2/2} s \rd s\\
  &= 2^{\pn+1}B_n^{\pn-1} \pi \left(\pn!-\int_{B_n r_n^2/2}^\infty s^\pn e^{-s} \rd s\right)\\
  &\ge\frac{1}{4} \|\varphi_n\|^2\left(1-\frac{1}{\pn!}\int_{B_n
    r_n^2/2}^\infty s^\pn e^{-s} \rd s\right).
\end{align*}
This finishes the proof.
\end{proof}
\begin{proof}[Proof of Theorem \ref{thm3a}]
In this proof we use Lemma \ref{landau} for $\pn=k$ $(n\in\N)$, where $k$ is some fixed natural
number. We choose the localization radii to be given by
\begin{equation}
  \label{eq:30}
  r_n=B_n^{(\varepsilon-1)/2}.
\end{equation}
Since $r_n\to 0$ as $n\to\infty$ we can  assume that the $\psi_n$  have
disjoint support, for otherwise we can extract a subsequence
satisfying this property.
In view of Remark \ref{velocity} we see  that, as $n\to\infty$,
\begin{equation}
  \label{eq:31}
  \frac{1}{\pn!}\int_{B_n
    r_n^2/2}^\infty s^\pn e^{-s} \rd
  s=\frac{1}{k!}\int_{B_n^\varepsilon/2}^\infty s^k e^{-s} \rd s\to 0.
\end{equation}
Then, according to Lemma \ref{landau} there exists an $N>0$ such that,
for all $n>N$,
\begin{align}
  \label{eq:32}
  \|\varphi_n\|\le 8 \|\psi_n\|.
\end{align}
Next we estimate the corresponding terms of \eqref{weyl}. A simple
calculation shows that 
\begin{align*}
  \|(D_{{\mathbf
     A}_n}+V_n)\hat{\varphi}_n\|^2\le |(V_n^2-2kB_n)/V_n|
 \|\varphi_{n,2}\|^2\le 8 |(V_n^2-2kB_n)/V_n|
 \|\psi_n\|^2.
\end{align*}
Thus, by  \eqref{con2a},   $\|(D_{{\mathbf
     A}_n}+V_n)\hat{\varphi}_n\|^2/\|\psi_n\|^2$ converges to $0$ as
 $n\to \infty$. 
Moreover,
using \eqref{eq:29a}, we get for $n$ sufficiently large 
\begin{align*}
  \|({\boldsymbol \sigma}\cdot\nabla
\chi_n)\hat{\varphi}_n\|^2&
\le r_n^{-2} \|\nabla\chi\|_\infty
\int_{r_n\le |\bx|\le 2r_n} |\varphi_n(\bx)|^2 \rd \bx
\\
&\le B_n^{1-\varepsilon} \|\nabla\chi\|_\infty 2^{(k+3)} \pi B_n^{k-1} 
\int_{B_n^\epsilon/2}^\infty s^k e^{-s} \rd s\\
&\le 2^5 \|\psi_n\|^2 \|\nabla \chi\|_\infty
B_n^{1-\varepsilon} \frac{1}{k!}\int_{B_n^\epsilon/2}^\infty s^k e^{-s} \rd s.
\end{align*}
Thus, $ \|({\boldsymbol \sigma}\cdot\nabla
\chi_n)\hat{\varphi}_n\|/\|\psi_n\|\to 0$ as $n\to\infty$. For the
last two terms of \eqref{weyl} we recall that $|\nabla B|$ and $|\nabla
V|$ grow with rate $0$ at infinity. Therefore, applying 
the mean value theorem we find a constant $C>0$ such that
\begin{align*}
\|{\boldsymbol \sigma}\cdot({\mathbf
    A}_n-\widetilde{\mathbf A}_n)\chi_n \hat{\varphi}_n\|^2 \le C^2
  |\nabla B(\bx_n)|^2 r_n^4  \|\hat{\varphi}_n\|^2\le 8 C^2
  \left|\frac{\nabla B(\bx_n)}{B_n^{1-\varepsilon}}\right|^2 \|\psi_n\|^2
\end{align*}
and
\begin{align*}
\|(V-V_n)\chi_n\hat{\varphi}_n\|^2\le  C^2
  |\nabla V(\bx_n)|^2 r_n^2  \|\hat{\varphi}_n\|^2
\le 8 C^2 \left|\frac{\nabla V(\bx_n)}{|V_n|^{1-\varepsilon}}\right|^2
\left(\frac{V_n^2}{B_n}
\right)^{1-\varepsilon} \|\psi_n\|^2.
\end{align*}
Therefore, we obtain the desired result in view of \eqref{con1a} and
since $V^2_n/B_n$ is uniformly bounded for large $n$.
\end{proof}
\begin{proof}[Proof of Theorem \ref{thm3}] In this case we apply Lemma
  \ref{landau} for
  $p_n=n$  $(n\in \N)$. We set
 \begin{align}
    \label{eq:28}
    r_n:=\sqrt{2n^{(1+\varepsilon)}/B_n}.
  \end{align} 
Using  that $s^n
e^{-s/2}\le (2n)^n e^{-n}$ we get
\begin{align*}
  \frac{1}{n!}\int_{n^{(1+\varepsilon)}}^\infty s^n e^{-s} \rd s\le
  2\frac{(2n)^n  e^{-n^{(1+\varepsilon)}/2-n}}{n!}\le {\rm
    exp}\big(n\ln(2n) -n^{(1+\varepsilon)}/2-n\big). 
\end{align*}
The last term tends to zero as $n\to \infty.$ Hence, we find
$N>0$  so large that
\begin{align}\label{vera}
  {\rm exp}\big(n\ln(2n) -n^{(1+\varepsilon)}/2-n\big)<1/2,\quad \mbox{for
   all}\quad n>N.
\end{align}
As a consequence we get in view of Lemma \ref{landau}, for all $n>N$,
\begin{align}\label{comon}
  \|\varphi_n\|^2\le 8\|\psi_n\|^2.
\end{align}
We now  estimate the radii.  In view of Remark \ref{growth} we may assume
that, for all $n>N$,
\begin{align}\label{rela}
  \frac{1}{2}\le \frac{V^2_n}{2nB_n}\le 2.
\end{align}
This together with \eqref{con0}
 implies that
\begin{align*}
  r_n^{-2}\le 2^{\kappa-1}\alpha 
  n^{\kappa-1-\varepsilon} 
\le n^{\kappa},
\end{align*}
for $n\in\N$ large enough.
Therefore, using \eqref{eq:29a}, we get 
\begin{align*}
  \|({\boldsymbol \sigma}\cdot\nabla
\chi_n)\hat{\varphi}_n\|^2
&\le n^{\kappa} \|\nabla\chi\|_\infty 2^{(n+3)} \pi B_n^{n-1} 
\int_{n^{1+\varepsilon}}^\infty s^n e^{-s} \rd s\\
&\le  4\|\nabla\chi\|_\infty  \|\varphi_n\|^2 n^{\kappa} {\rm
  exp}\big(n\ln(2n) -n^{(1+\varepsilon)}/2-n\big)\\
& \le  2^{5} \|\nabla\chi\|_\infty  \|\psi_n\|^2   n^{\kappa} {\rm
  exp}\big(n\ln(2n) -n^{(1+\varepsilon)}/2-n\big).
\end{align*}
Thus, $\|({\boldsymbol \sigma}\cdot\nabla
\chi_n)\hat{\varphi}_n\|^2/\|\psi_n\|^2\to 0$ as $n\to\infty$.

For the last two terms in \eqref{weyl} we use \eqref{eq:28} and
\eqref{rela} to get, for $\nu \in[0,1]$,
\begin{align}
  \label{eq:33}
\frac{r_n^2}{|\bx|^{2 \nu}}=\frac{2}{B_n}
\frac{n^{1+\varepsilon}}{|\bx_n|^{2 \nu}}\le 
\frac{2^{2+\varepsilon}}{B_n |\bx_n|^{2 \nu}}   \left( \frac{V_n^2}{2B_n}\right)^{1+\varepsilon}.
\end{align}
This implies by \eqref{con24} that for $\nu\in [0,1]$ the ratio
$r_n/|\bx_n|^{\nu}\to 0$ as $n\to\infty$. Hence, on the one hand, we see that
the support of the $\psi_n$ are mutually disjoint (at least for a
subsequence).  On the other hand, since $|\nabla V|$ and $|\nabla B|$ vary
with rate $\nu \in [0,1]$, this enable us to use the mean value
theorem to obtain that
\begin{align*}
  \|{\boldsymbol \sigma}\cdot({\mathbf
    A}_n-\widetilde{\mathbf A}_n)\chi_n \hat{\varphi}_n\|^2
&\le C^2
  |\nabla B(\bx_n)|^2 r_n^4  \|\hat{\varphi}_n\|^2\\
&\le 2^{7+2\varepsilon} C^2 \left[\frac{|\nabla B(\bx_n)|}{B_n}
  \left(\frac{V_n^2}{2B_n}\right)^{1+\varepsilon}\right]^2
\|{\psi}_n\|^2, 
\end{align*}
where we use again \eqref{eq:28} combined with
\eqref{rela} and \eqref{comon}. Analogously, we get
\begin{align*}
  \|(V-V_n)\chi_n\hat{\varphi}_n\|^2&\le  C^2
  |\nabla V(\bx_n)|^2 r_n^2  \|\hat{\varphi}_n\|^2\\
&\le  2^{6+2\varepsilon} C^2
\left[\frac{|\nabla
    V(\bx_n)|}{V_n}\left(\frac{V^2_n}{2B_n}\right)^{1+\varepsilon}\right]^2
\|\psi_n\|^2.
\end{align*}
Hence, in view of \eqref{con2}  we get that
$
  \|(\DA+V)\psi_n\|/\|\psi_n\|\to 0$ as  $n\to \infty
$ which proves the theorem.
\end{proof}
\begin{appendix}
\section{Results  for rotationally symmetric potentials}
In this appendix we study some properties of the Dirac operator
$H = \DA + V$ when  $V$ and $B$ are rotationally symmetric.  Let  $\bf
A$ be  given by the rotational  gauge
\begin{equation}\label{goischt}
{\bf A}({\bf x}) := \frac{A(r)}{r} 
\begin{pmatrix} -x_2 \\ x_1 \end{pmatrix}\,, \quad
A(r) =  \frac{1}{r} \int_0^r B(s) s \mbox{d}s, 
\end{equation}
where $r=|\bx|$. 
We can decompose $H$
into a direct sum of operators on the $j$-th angular momentum eigenspace,
i.e., there is a unitary map 
$$U: L^2(\R^2,\C^2) \to \bigoplus_{j\in \Z}L^2(\R^+,\C^2; \mbox{d}r)$$
such that 
$ UHU^* =\bigoplus_{j\in \Z} h_j$
with
$$h_j := - \ri \sigma_2 \partial_r  + \sigma_1 \left( A(r) - \frac{m_j}{r}\right) + v(r)
 \quad \  \mbox{on} \ L^2(\R^+,\C^2; \mbox{d}r), $$
where  $v(|\bx|):= V(\bx)$ and $m_j = j +1/2$ , $j \in \Z$  (see
e.g. \cite{Thaller}). Since H is essentially self-adjoint on
$\core$ we deduce that $h_j$  is also essentially self-adjoint on
$\big(U\core\big)_j \subset L^2(\R^+,\C^2; \mbox{d}r)$ for any $j \in \Z$. 
\begin{proposition}\label{purepoint}
Assume that $A \in C^1(\R^+,\R)$ and $v \in C(\R^+,\R)$  
are such that the conditions  
\begin{eqnarray}
 && |A(r)|\to \infty\quad\mbox{as}\quad r \to \infty, \label{pp1}\\
 && \frac{A'(r)}{A^2(r)} 
  \to 0 \quad\mbox{as}\quad r \to \infty, \label{pp2}\\
  && \limsup_{r \to \infty} \left| \frac{v(r)}{A(r)} \right| < 1, 
\end{eqnarray}
are fulfilled. Then, for all $j \in \Z$, the spectrum of $h_j$ is
purely discrete. In
particular, the spectrum of $\DA+V$ is pure point.
\end{proposition}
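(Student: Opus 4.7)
The plan is to show that each angular-momentum reduced operator $h_j$ has compact resolvent; consequently every $\sigma(h_j)$ is purely discrete, and since the eigenvectors of the $h_j$'s pull back through $U^*$ to eigenvectors of $\DA+V$, the identity $\sigma(\DA+V)=\overline{\bigcup_{j\in\Z}\sigma(h_j)}$ makes $\sigma(\DA+V)$ the closure of its point spectrum, i.e.\ pure point. To get $(h_j\pm \ri)^{-1}$ compact I would invoke Persson's theorem applied to the positive self-adjoint operator $h_j^2$, for which $C_0^\infty(\R^+,\C^2)$ is a form core by essential self-adjointness of $h_j$. This reduces matters to verifying
\begin{equation*}
\lim_{R\to\infty}\inf\!\left\{\|h_j\varphi\|^2/\|\varphi\|^2:\varphi\in C_0^\infty((R,\infty),\C^2),\ \varphi\neq 0\right\}=+\infty.
\end{equation*}

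Writing $h_j=L+v$ with $L:=-\ri\sigma_2\partial_r+\sigma_1(A(r)-m_j/r)$, a short Pauli-algebra computation gives
\begin{equation*}
L^2=-\partial_r^2+(A-m_j/r)^2-\sigma_3(A'+m_j/r^2).
\end{equation*}
By integration by parts, for $\varphi\in C_0^\infty((0,\infty),\C^2)$,
\begin{equation*}
\|L\varphi\|^2\ge\int\bigl[(A-m_j/r)^2-|A'|-|m_j|/r^2\bigr]|\varphi|^2\,\rd r.
\end{equation*}
For $\varphi$ supported in $(R,\infty)$ with $R$ large, \eqref{pp1} forces $|A|$ to be uniformly large on the support, \eqref{pp2} makes $|A'|/A^2$ uniformly small there, and both $|m_j|/(r|A|)$ and $|m_j|/(r^2A^2)$ become arbitrarily small. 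Combining these I obtain $\|L\varphi\|^2\ge(1-\epsilon_R)\int A^2|\varphi|^2\,\rd r$ with $\epsilon_R\to 0$ as $R\to\infty$.

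The remaining hypothesis $\limsup_{r\to\infty}|v/A|=:1-\eta<1$ yields, on the same support, $\|v\varphi\|^2\le(1-\eta)^2\int A^2|\varphi|^2\,\rd r$, so $\|v\varphi\|\le\mu\|L\varphi\|$ for some $\mu<1$ once $R$ is large enough. The plain triangle inequality then gives
\begin{equation*}
\|h_j\varphi\|\ge\|L\varphi\|-\|v\varphi\|\ge(1-\mu)\sqrt{1-\epsilon_R}\,\bigl(\inf_{r>R}|A(r)|\bigr)\|\varphi\|,
\end{equation*}
which diverges as $R\to\infty$ by \eqref{pp1}, furnishing the required Persson bound.

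The main difficulty I would expect to navigate is that $v$ is only assumed continuous, so one cannot expand $(L+v)^2$ as a self-adjoint operator and use the ``anticommutator'' $\{L,v\}$ directly, as this would produce a derivative $v'$. My plan sidesteps this by never squaring $h_j$ on a single vector: $v$ enters only through the scalar multiplication $v\varphi$, and the norms $\|L\varphi\|$ and $\|v\varphi\|$ are compared through the plain triangle inequality rather than through any cancellation mechanism; the sub-unit ratio of $\|v\varphi\|$ to $\|L\varphi\|$ is forced by combining \eqref{pp2} with the $\limsup|v/A|<1$ hypothesis.
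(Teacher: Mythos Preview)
Your argument is correct and shares its core with the paper's: both compute $L^2$ via the Pauli identity to see that the effective scalar potential behaves like $A^2$ at infinity, and both exploit $\limsup|v/A|<1$ to show that the $v$-contribution is strictly dominated. The packaging differs slightly. The paper uses the elementary inequality $\|L\varphi+v\varphi\|^2\ge(1-\lambda)\|L\varphi\|^2+(1-\lambda^{-1})\|v\varphi\|^2$ for a suitable $\lambda\in(0,1)$ to obtain a form bound $h_j^2\ge(1-\lambda)s_j$, where $s_j=-\partial_r^2+w_j$ is an explicit Schr\"odinger operator with diagonal matrix potential $w_j\to+\infty$; discreteness of $\sigma(h_j^2)$ then follows from the min-max principle. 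You instead bound $\|v\varphi\|\le\mu\|L\varphi\|$ with $\mu<1$, apply the triangle inequality to get $\|h_j\varphi\|\ge(1-\mu)\|L\varphi\|$, and feed this directly into Persson's criterion. Both routes carefully avoid differentiating $v$ (which is only continuous), and both reach the conclusion with comparable effort; your version is arguably a touch more direct in that it never needs to name the auxiliary comparison operator.
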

\begin{proof}
  Since $\limsup_{r \to \infty} v^2(r)/A^2(r) < 1$ we find constants
  $\mu > 1$ and $r_0 >0$ such that $A^2(r) > \mu v^2(r)$ for $r >
  r_0$. We pick $\lambda \in (\mu^{-1},1)$ and define, for $j \in \Z$,
  the potential function $ w_j$ on $\R^+$ as
$$w_j(r) :=  \left( A^2(r) - 2\frac{m_j}{r} A(r) 
  - \lambda^{-1}v^2(r)\right) +
\begin{pmatrix}
  -A'(r) & 0 \\
  0 & A'(r)
\end{pmatrix}.
$$
For any $j \in \Z$ the matrix valued potential $w_j$ is real-valued
and diagonal. Moreover, the diagonal entries of $w_j$ are bounded from
below and converge to $+\infty$ as $r \to \infty$ due to \eqref{pp1}
and \eqref{pp2}. As a consequence we
get, for every $j \in \Z$, a self-adjoint operator
$$s_j :=  
\begin{pmatrix}
  -\partial_r^2 + \frac{j^2-1/4}{r^2} & 0 \\
    0    & -\partial_r^2 + \frac{(j+1)^2-1/4}{r^2}
\end{pmatrix}
 + w_j(r)
 \quad \  \mbox{on} \quad L^2(\R^+,\C^2; \mbox{d}r)\, $$
 with form core $\big(U\core\big)_j$.
Moreover, $s_j$ is bounded from 
below and has purely discrete spectrum.
In order to show that $h_j$ has also purely discrete spectrum we
observe that, for any 
 $\psi \in \big(U\core\big)_j$,
\begin{align*}
 \|h_j\psi\|^2 
&\ge   (1-\lambda) \left\|\left(- \ri \sigma_2 \partial_r  + 
                 \sigma_1 \left( A(r) - \frac{m_j}{r}\right)\right)\psi \right\|^2
     + ( 1- \lambda^{-1})\|v\psi\|^2 \\
&= (1-\lambda)\left[\left\|\left(- \ri \sigma_2 \partial_r  + 
                    \sigma_1 \left( A(r) - \frac{m_j}{r}\right)\right)\psi \right\|^2
     - \lambda^{-1}\|v\psi\|^2 \right] \\[0.2cm]
&= (1-\lambda)\sps{\psi}{s_j \psi}\,,
\end{align*}
which is equivalent to $h_j^2 \ge (1-\lambda) s_j$, for $j \in
\Z$. An application of the min-max principle (see \cite[Section
XIII.1]{ReedSimon1978}) gives that  $h_j^2$ has purely discrete
spectrum. This implies the claim for $h_j$.
\end{proof}
 \section{Some technical tools}
\begin{lemma}\label{schneider}
Assume $V\in C^1(\R^2,\R)$, $B\in C(\R^2,\R)$ and 
${\bf A}\in C^1(\R^2,\R^2)$ with $B= \curla$ on $\R^2$, such that the conditions  
\begin{eqnarray}
&& V(\bx), B(\bx) \to \infty\quad \mbox{as} \quad |\bx| \to \infty \label{schneiderc1}  \\
&& \left| \frac{\nabla V (\bx)}{V(\bx)} \right|
     \to 0 \quad \mbox{as} \quad |\bx| \to \infty \label{schneiderc2}  \\
&& \limsup_{|\bx|\to \infty}
  \frac{V^2(\bx)}{2B(\bx)}<1  \label{schneiderc3} 
\end{eqnarray}
are fulfilled. Then, there is a constant $\eta \in (0 ,1)$ such that
for any $\delta \in(0,1)$ we can find an electric potential $
\hat{V}\in C^1(\R^2,\R)$ and a magnetic field $\hat{B}\in
C(\R^2,\R)$ satisfying conditions \eqref{c1}-\eqref{c3} such that
\begin{center} $\sigma(D_{\bf A} + V )$ is discrete if and only if
$\sigma(D_{\hat{\bf A}} +\hat{V} )$ is discrete.
\end{center}
Here $\hat{\bf
  {A}}\in C^1(\R^2,\R^2)$ is a magnetic vector
potential satisfying $\hat{B} = \mbox{curl}\ \hat{\bf {A}}$ on $\R^2$.
\end{lemma}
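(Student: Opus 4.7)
The plan is to modify $V$ and $B$ only on a compact set so that the pointwise bounds \eqref{c1}--\eqref{c3} hold on all of $\R^2$, and then to exploit that the resulting Dirac operator $\hat H := D_{\hat{\bf A}} + \hat V$ differs from $H = \DA + V$ by a bounded perturbation, so that compactness of the resolvent is preserved.

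The asymptotic conditions \eqref{schneiderc1}--\eqref{schneiderc3} produce $\eta \in (0,1)$ depending only on the limsup in \eqref{schneiderc3} and, for each $\delta \in (0,1)$, a radius $R = R(\delta) > 0$ such that all three target conditions already hold on $\{|\bx|>R\}$, with some slack in \eqref{c2} (say $|\nabla V| \le \tfrac{\delta}{2}V$). Pick a cutoff $\chi \in C_0^\infty(\R^2,[0,1])$ with $\chi\equiv 1$ on $\{|\bx|\le R\}$ and $\mathrm{supp}\,\chi\subset\{|\bx|\le 2R\}$, and set
\[ \hat V := V + M\chi, \qquad \hat B := B + N\chi, \]
with constants $M, N > 0$ to be chosen. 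Taking $M$ large enough to push $\hat V$ above $1/\delta$ on $\mathrm{supp}\,\chi$ and to keep $M\|\nabla\chi\|_\infty$ small compared to $\tfrac{\delta}{2}\hat V$ in the transition annulus verifies \eqref{c1} and \eqref{c2}; then taking $N$ large enough to absorb $\sup_{|\bx|\le 2R}\hat V^2$ yields \eqref{c3} on $\mathrm{supp}\,\chi$, while on the complement all three conditions hold from Step 1.

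The bump $\hat B - B = N\chi$ is smooth and compactly supported, so there exists ${\bf a} \in C^1(\R^2,\R^2)$, globally bounded, with $\mathrm{curl}\,{\bf a} = N\chi$; take for instance $a_1\equiv 0$ and $a_2(\bx) = \int_{-\infty}^{x_1}N\chi(t,x_2)\,\mathrm{d}t$, which is bounded and supported in a horizontal strip because $\chi$ has compact support. Set $\hat{\bf A} := {\bf A} + {\bf a}$, so that $\mathrm{curl}\,\hat{\bf A} = \hat B$. Then
\[ \hat H - H = -{\boldsymbol \sigma}\cdot {\bf a} + (\hat V - V) = -{\boldsymbol\sigma}\cdot{\bf a} + M\chi \]
is multiplication by a bounded matrix-valued function, hence a bounded self-adjoint operator on $\mathcal{H}$.

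Choose $\lambda > \|\hat H - H\|$ so that both $H+\ri\lambda$ and $\hat H+\ri\lambda$ are invertible. The standard resolvent identity
\[ (\hat H + \ri\lambda)^{-1} - (H + \ri\lambda)^{-1} = -(\hat H+\ri\lambda)^{-1}(\hat H - H)(H+\ri\lambda)^{-1} \]
together with the fact that the compact operators form a two-sided ideal in $\mathcal{B}(\mathcal{H})$ shows that $(H+\ri\lambda)^{-1}$ is compact if and only if $(\hat H+\ri\lambda)^{-1}$ is. Equivalently, $\sigma(H)$ is discrete if and only if $\sigma(\hat H)$ is. The only delicate point is the joint verification of \eqref{c1}--\eqref{c3} inside the transition annulus $R<|\bx|<2R$, where both the original data and the bump contribute to $\hat V, \hat B$, and $\nabla\hat V$; this is handled by ordering the choices $(R, M, N)$ appropriately, which is the reason for the extra slack built into Step 1.
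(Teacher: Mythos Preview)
Your proof is correct and follows the same strategy as the paper: modify $V$ and $B$ only on a compact set so that \eqref{c1}--\eqref{c3} hold globally, choose a bounded vector potential for the compactly supported field correction, and conclude that the two Dirac operators differ by a bounded perturbation, so resolvent compactness transfers. The paper uses a convex-combination modification $\hat V_r = \xi_r V + (1-\xi_r)\,c$ with $\xi_r\in[\tfrac12,1]$ (and an analogous formula for $\hat B$) rather than your additive bump $V+M\chi$, which makes the verification of \eqref{c2} on the inner region a bit cleaner since the constant $c$ is built directly from $\sup_{|\bx|\le R_2}|\nabla V|$; but the mechanism is identical. One wording issue: ``taking $M$ large enough \dots\ to keep $M\|\nabla\chi\|_\infty$ small'' reads as a contradiction---the resolution (which your closing remark on ordering $(R,M,N)$ anticipates) is that $M$ is fixed from bounds on a \emph{fixed} inner ball independent of $R$, and then $R$ is enlarged so that $\|\nabla\chi\|_\infty\sim 1/R$ absorbs $M$.
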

\begin{remark}
 An analogue statement holds true for potentials $V$ with $V(\bx) \to -\infty $ as $|\bx| \to \infty$
 and magnetic fields $B$ with $B(\bx) \to -\infty $ as $|\bx| \to \infty$
\end{remark}
\begin{proof}
 We first construct the potential $\hat{V}$.
Without lost of generality we assume  that $V(\bx) \geq 0$ for all
$\bx \in \R^2$ (otherwise start with $ V + c $ for some constant
$c$).
Due to assumption \eqref{schneiderc3} we  find $R_1 \geq 1$ and $\eta \in (0 ,1)$ such that 
\begin{align}\label{joe2}
V^2(\bx) < 2(1 - \eta) B(\bx), \quad|\bx| \geq R_1.
\end{align} 
Similarly by \eqref{schneiderc1} and \eqref{schneiderc2}
we  find for any $\delta\in(0,1)$ a constant $R_2 > R_1$ such that
\begin{align}\label{joe1}
V(\bx) > 2\delta^{-1}\quad \mbox{and} \quad|\nabla V(\bx)| <
\frac{\delta}{4}\,  V(\bx), \quad |\bx| \geq R_2.
\end{align}
Pick $ \xi \in C^{\infty}(\R^2,\ [\tfrac{1}{2},1]) $ such that
$\xi (\bx)=1/2 $ for $|\bx|\le 1$ and $\xi(\bx)=1$ for $|\bx|\ge 2$.
For $r>R_2$ we set $\xi_r(\bx) := \xi \big(\frac{\bx}{r}\big) $, $\bx
\in \R^2 $. Define for $\bx \in \R^2$
\begin{equation*}
 \hat{V}_r(\bx) :=  \xi_r(\bx) V(\bx) + (1 - \xi_r(\bx) )4 \delta^{-1} (1 + M),
\end{equation*}
where $M := \sup \left\{|\nabla V(\bx)| \mid  |\bx| \leq R_2 \right\}
$. Clearly 
\begin{align}\label{joe4}
 \hat{V}_r(\bx) = V(\bx), \quad|\bx|>2r.
\end{align}
Furthermore, we find  that 
\begin{align}
  \label{joe3}
  \hat{V}_r(\bx) \ge \delta^{-1}, \quad  2\hat{V}_r(\bx) \ge
  V(\bx),\qquad \bx\in \R^2.
\end{align}
A simple computation yields, for any $\bx\in \R^2$,
\begin{align*}
  |\nabla \hat{V}_r(\bx) |\le   |\nabla V(\bx)| +
\frac1r\left( 2 \|\nabla \xi \|_{\infty} + 4 \|\nabla \xi \|_{\infty} \delta^{-1}(1 + M) \right)  \hat{V}_r(\bx).
\end{align*}
Note that  $|\nabla V(\bx)|\le {\rm max}\{M, \delta V(\bx)/4\}, \bx \in \R$.
Hence, using \eqref{joe1} and the definition of $\hat{V}_r$ we get that
$|\nabla V(\bx)|\le \delta \hat{V}_r(\bx) /2$, for all $\bx\in
\R^2$. Thus, we find a constant $r_0>R_2$ so large that
\begin{align}
  \label{eq:222}
   |\nabla \hat{V}_{r_0}(\bx) |\le  \delta \hat{V}_{r_0}(\bx), \quad \bx \in \R^2.
\end{align}
Define  $\hat{V}:=\hat{V}_{r_0}$ and note that $\hat{V}$  fulfills the desired
properties in view of \eqref{joe3} and \eqref{eq:222}.

Next we define the magnetic field as 
\begin{equation*}
\hat{B}(\bx) := 
2 \left( \xi_{2r_0}(\bx) - \tfrac{1}{2} \right) B(\bx)  + \left( 1 - \xi_{2r_0}(\bx) \right) \frac{\hat{V}^2(\bx)}{(1-\eta)} , 
\quad \bx \in \R^2.
\end{equation*}
In view of \eqref{joe4} and  \eqref{joe2} we have 
\begin{align*}
  \hat{B}(\bx)\ge \frac{\hat{V}^2(\bx)}{2 (1-\eta)},\quad \bx\in \R^2.
\end{align*}
Hence, $\hat{V}$ and $\hat{B}$ satisfy conditions
\eqref{c1}-\eqref{c3}.
 
Since the function $B - \hat{B}$ has compact support in $\R^2$, we find a function 
${\bf G}\in C^1(\R^2,\R^2)$ such that $|| \bf{G} ||_{\infty}  < \infty $ and 
$B - \hat{B} =\mbox{curl} \ {\bf G}$ on $\R^2$. We define 
$\hat{\bf A} (\bx) := \bf A ( \bx) - \bf G (\bx)$ for  $\bx \in
\R^2$. By construction 
we know that $(\DA +V) - (D_{\hat{\bf A}} +\hat{V} )$ is bounded
on $\hilbert$. Hence, the resolvent difference of $\DA +V$
and $D_{\hat{\bf A}} +\hat{V}$ is compact if one of the resolvents is
itself 
compact.
From this follows the claim.
\end{proof}
\end{appendix}

\end{document}